\documentclass[a4paper,USenglish,cleveref, autoref, thm-restate]{lipics-v2021}

\usepackage{sam-barr}
\usepackage{makecell}
\usepackage[disable]{todonotes}  
\usepackage{chngpage}

\bibliographystyle{plainurl}

\title{Efficiently Partitioning the Edges of a 1-Planar Graph into a Planar Graph and a Forest} 

\titlerunning{Partitioning the Edges of a 1-Planar Graph} 

\author{Sam Barr}{University of Waterloo, Canada}{s4barr@uwaterloo.ca}{}{}

\author{Therese Biedl}{University of Waterloo, Canada}{biedl@uwaterloo.ca}{}{Research supported by NSERC.}

\authorrunning{S. Barr and T. Biedl}

\Copyright{Sam Barr and Therese Biedl} 

\ccsdesc[500]{Mathematics of computing~Graph algorithms}
\ccsdesc[500]{Theory of computation~Dynamic graph algorithms}

\keywords{1-planar graphs, edge partitions, algorithms, data structures.} 

\category{} 

\relatedversion{} 




\nolinenumbers 


\EventEditors{John Q. Open and Joan R. Access}
\EventNoEds{2}
\EventLongTitle{42nd Conference on Very Important Topics (CVIT 2016)}
\EventShortTitle{CVIT 2016}
\EventAcronym{CVIT}
\EventYear{2016}
\EventDate{December 24--27, 2016}
\EventLocation{Little Whinging, United Kingdom}
\EventLogo{}
\SeriesVolume{42}
\ArticleNo{23}

\hideLIPIcs
\begin{document}

\maketitle

\begin{abstract}
    1-planar graphs are graphs that can be drawn in the plane such that
    any edge intersects with at most one other edge.
    Ackerman showed that the edges of a 1-planar graph can be partitioned
    into a planar graph and a forest, and claims that the proof leads to
    a linear time algorithm. However, it is not clear how one would
    obtain such an algorithm from his proof.
    In this paper, we first reprove
    Ackerman's result (in fact, we prove a slightly more general statement)
    and then show that the split can be found in linear time by using
    an edge-contraction data structure by Holm et al.
\end{abstract}

\section{Introduction}

In this paper, we study the class of 1-planar graphs; graphs that can be drawn
in the plane such that every edge crosses at most one other edge. 1-planar
graphs were introduced by Ringel~\cite{ringel}, motivated by the problem
of coloring a planar graph and its faces.
Since then, there have been many publications concerning 1-planar graphs,
both for theoretical results such as coloring and algorithmic results such as
solving drawing and optimization problems. See~\cite{annotated-bibliography}
for an annotated bibliography from 2017
and~\cite{beyond-planar-graphs} for a more recent book that includes
developments since then.

Many of the results for 1-planar graphs are obtained by first converting
the 1-planar graph $G$ into a planar graph $G'$, applying results for
planar graphs, and then expanding the result from $G'$ back to $G$.
(We will give specific examples below.)  There are numerous ways of 
how to create $G'$, e.g., by deleting vertices or by replacing crossings
by dummy-vertices.  Of particular interest to us here is to make a
1-planar graph planar
by deleting edges.  Put differently, we want an {\em edge partition}, i.e.,
write $E(G)= E'\cup E''$ such that $E'$ forms a planar graph while
$E''$ has some special structure that makes it possible to expand a
solution for $G'$ to one for $G$.  

\subparagraph{Previous Work.}
The main focus of this paper is a result by Ackerman~\cite{2014-Ackerman-1PlanarSplit}.
He established that
the edges of a 1-planar graph can be partitioned such that one partition induces
a planar graph and the other induces a forest.
This was an extension of an earlier result from Czap and
Hud\'{a}k~\cite{czap-1-planar-split}, who proved it for optimal 1-planar graphs
(1-planar graphs with the maximum $4n-8$ edges). 
Other partitions of near-planar graphs have also been studied; we list a few here.
Lenhart et al.~\cite{lenhart-1-planar-partition} show that optimal 1-planar
graphs can be partitioned
into a maximal planar graph and a planar graph of maximum
degree four 
(the bound of four is shown to be optimal).
Bekos et al.~\cite{2-plane-partition} provide
edge partition results for some $k$-planar graphs (graphs that
can be drawn in the plane such that any edge crosses at most $k$ other edges).
Di Giacomo et al.~\cite{nic-edge-partitions} prove edge partition
results for so-called NIC-graphs, a subclass of 1-planar graphs.

For algorithmic purposes, we need to find such edge partitions in linear time.
With one exception, the above papers either explicitly come with a linear-time
algorithm to find the edge-partition, or such an algorithm can easily be derived
from the proof.  The one exception is the paper by Ackerman~\cite{2014-Ackerman-1PlanarSplit}. 
He claims that the partition of a 1-planar graph into a planar
graph and a forest can be found in linear time, but provides no proof.
Moreover, it is not clear how one would achieve linear time
(or even $O(n \log n)$ time),  since he relies on contracting
edges while repeatedly testing whether two vertices share a face and occasionally
splitting the graph into subgraphs, and neither of these operations can 
trivially be done in constant time in planar graphs.  (We confirmed this in private communication 
with Ackerman.)

\subparagraph{Our Results.}
In this paper, we show that a partition
of a 1-planar graph into a planar graph and a forest can be found in linear time.
We were not able to use Ackerman's proof for this directly, so as a first step we
re-prove the result in a slightly different way to avoid some problematic
situations and so that then a linear-time algorithm
can be established. A crucial ingredient for this is a
data structure
by Holm, Italiano, Karczmarz, \L\k{a}cki, Rotenberg and Sankowski~\cite{2017-ESA-PlanarGraphContraction} 
which allows for efficiently contracting edges of planar graphs. To our knowledge,
this data structure has not been implemented.
Because of this, we also show that the partition can be computed in
$O(n \log n)$ time using a simple data structures based on incidence lists.

As a consequence of our result, a number of related problems can
be solved in linear time:
\begin{itemize}
    \item Angelini et al.~\cite{angelini2019quasefe} studied the problem
        of finding simultaneous quasi-planar drawings of graphs where some edges are fixed.
        They used Ackerman's partition result and cited its claimed linear runtime;
        with our result the linear runtime of \cite{angelini2019quasefe} is established.
    \item It is known that every 1-planar graph has {\em arboricity} 4, i.e., its 
	edges can be partitioned into 4 forests.  (This follows from Nash-Williams
	formula for arboricity \cite{nash-williams}
    since 1-planar graphs have at most $4n-8$ edges~\cite{schumacher-4n}.)
	The arboricity (and the corresponding edge-partition) of a graph can be computed 
    in polynomial time~\cite{gabow-arboricity},
	but to our knowledge not in linear time.  For \emph{planar} graphs, a
        split into 3 forests can be found in linear time~\cite{schnyder}.
	So with our result, the partition of a 1-planar graph into 4 forests can
        be done in linear time as well, by first partitioning into a forest and a planar
	graph and then applying \cite{schnyder} onto the planar graph.
    \item If $G$ is a bipartite 1-planar graph, then it has at most $3n-8$ edges \cite{2014-Karpov} and
	hence arboricity 3. With our result we can
        partition it into a forest and a planar bipartite graph in linear time.
        The planar bipartite graph can be partitioned into two forests in
        linear time~\cite{bipartite-2-trees}. In consequence, every 1-planar bipartite graph
        can be partitioned into three forests in linear time.
    \item From a partition into $d$ forests one easily obtains an edge orientation with
	in-degree at most $d$. For bipartite graphs, such an edge-orientation can
	be used to prove $(d{+}1)$-list-colorability~\cite{choice-number-in-graphs},
        and this list-coloring can be found in linear time~\cite{owen-manuscript}.
	Putting everything together, therefore our paper fills the one missing gap
	to show that 1-planar bipartite graphs can be 4-list-colored in linear time.
\end{itemize}

Our paper is structured as follows: In Section~\ref{definitions}
we go over necessary terminology for multigraphs and present Ackerman's
proof in order to demonstrate that it does not immediately lead to
a linear time algorithm. In Section~\ref{existence} we present our new proof.
In Section~\ref{algorithm} we use our alternate proof to design an efficient
algorithm for finding the partition, before concluding in Section~\ref{conclusion}.  Some proofs and pseudo-codes have been deferred to the appendix for space-reasons.

\section{Background}
\label{definitions}

We assume basic familiarity with graph theory (see e.g. \cite{diestel}).
All graphs in this paper are finite and connected, but not necessarily simple.

For a (multi)graph $G$ and a vertex $x$ of $G$, $d(x)$ is the number of edges
incident to $x$.

We recall that a (multi)graph $G$ is called \emph{planar} if it can be
drawn in the plane without edges crossing.
Let $G$ be a planar (multi)graph given with a drawing $\Gamma$. The maximal
regions of $\R^2 \setminus \Gamma$ are the \emph{faces} of $G$.
We add a \emph{chord} to a face $f$ by adding an edge between two non-adjacent
vertices on the boundary of $f$, and drawing the edge through the region of $f$.
For each vertex $x$
with incident edges $e_1, \ldots, e_k$, drawing $\Gamma$ places these edges in some
rotational clockwise order around $x$. The space between two edges which are adjacent
in this rotational order form an \emph{angle}. The \emph{degree} of a face $f$
is the number of angles contained in the face.  We say that a face $f$
is a \emph{quadrangle} if it has degree 4. Note that this includes
both faces with 4 vertices on their boundary, and some faces with
fewer than 4 vertices on their boundary (see Figure~\ref{contract:z1:z3}).
If $f$ is a quadrangle with exactly 4 vertices on its boundary, then we say
that $f$ is a \emph{simple quadrangle} (the quadrangles of a simple planar graph
will all be simple quadrangles).
A face of degree 3 is a \emph{triangle}, and a face of degree 2 is a \emph{bigon}.

The \emph{facial cycle} of a quadrangle $f$ is a 4-tuple
$\angles{z_0, z_1, z_2, z_3}$ such that each $z_i$ is on the boundary of $f$
and there are edges $z_iz_{i+1}$ and $z_iz_{i-1}$ (arithmetic modulo 4)
which form an angle in $f$. Note that $z_0,z_1,z_2,z_3$ need not be distinct
if $f$ has loops or parallel edges.
We say that $z_0$ and $z_2$ are \emph{opposing vertices in $f$} (we will often omit
mention of the face when it is clear from context).
Likewise $z_1$ and $z_3$ are opposing vertices in $f$. Note that it is possible
for a vertex to oppose itself in a quadrangle.

\emph{Stellating} a face $f$ of a planar graph is the process of adding a new vertex $s$
inside of $f$, and adding an edge from $s$ to every vertex on the boundary of $f$.
Given two vertices $a$ and $b$, we \emph{contract} $a$ and $b$ by
creating a new vertex $c$, adding an edge $vc$ for each edge $va$ and $vb$, and
deleting $a$ and $b$ and all their incident edges. If $a$ and $b$ were incident,
then $c$ has a loop, and if $a$ and $b$ were both adjacent to a vertex $v$,
then $c$ will have parallel edges to $v$. If $a$ and $b$ were both on the
boundary of some face $f$ in a planar graph, then we can
\emph{contract $a$ and $b$ through $f$} by placing this new vertex $c$
inside of $f$. This preserves planarity, but destroys the face $f$.

A \emph{1-planar} graph is a graph $G$ that can be drawn in the plane such that any
edge intersects with at most one other edge.   
Here ``drawing'' always means a \emph{good drawing} (see e.g.~\cite{2013-Schaefer}),
in particular that no edge intersects itself, and that incident edges do not cross.
From now on, whenever we speak of a 1-planar graph, we assume that one particular
1-planar drawing has been fixed.  
A pair of edges that intersect
each other are a \emph{crossing pair}, and the point where they intersect
is known as the \emph{crossing point}.
The \emph{planarization} is the graph $G^\times$ obtained
by replacing each crossing point with a new vertex. 
A 1-planar graph $G$ is \emph{planar-maximal} if no planar edge can be
added to the fixed drawing of $G$. 

For algorithmic purposes, a drawing of a planar graph can be specified by
giving the rotational clockwise order of edges at every vertex; this specifies the
circuits bounding the faces uniquely.  A drawing of a 1-planar graph can
be specified by giving a drawing of its planarization, with the vertices
resulting from a crossing point marked as such.
Since testing 1-planarity is NP-hard~\cite{1-planar-np-hard}, we assume
that any 1-planar graph $G$ is given with such a drawing.  We also assume
that $G$ is planar-maximal, because any 1-planar graph can be made planar-maximal
in linear time by adding edges~\cite{maximal-1-planar-linear},
and having more edges can only make partitioning more difficult.

For ease of notation, we define a shortcut for our partition problem.

\begin{definition}
    A graph $G$ has a \emph{PGF-partition} if $E(G)$ can be partitioned into
    two sets $A$ and $B$ such that $G[A]$ is a planar graph and $G[B]$ is a forest.
\end{definition}

\subsection{Ackerman's Proof}

To establish the difficulties of achieving a linear time algorithm, we briefly
review here Ackerman's proof of the existence of a PGF-partition.


Let $G$ be a (planar-maximal) 1-planar graph without loops drawn in the plane.
Remove all crossing pairs of $G$. Call the resulting graph $H$ the
\emph{(planar) skeleton} of $G$ 
\cite{maximal-1-planar-linear}.
Observe that the faces of $H$ are either bigons, triangles, or quadrangles, and that
there is a 1-1 mapping between the quadrangles of $H$ and the crossing pairs of $G$.
Moreover, by this 1-1 mapping and the fact that the two edges forming a crossing pair
do not share an endpoint, one can see that the quadrangles of $H$ are in
fact simple quadrangles. Ackerman, similarly to Czap and Hud\'ak \cite{czap-1-planar-split},
establishes the following.

\begin{lemma}[Ackerman \cite{2014-Ackerman-1PlanarSplit}]
    \label{adding:chords:lemma}
    Let $G$ be a 1-planar graph, and let $H$ be the skeleton of $G$.
    If we can add a chord to every quadrangle of $H$ such that the chords induce
    a forest, then $G$ has a PGF-partition.
\end{lemma}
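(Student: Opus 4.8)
The plan is to read the partition off the chord assignment directly, exploiting the fact that the only two chords available to a quadrangle are precisely the two edges of the corresponding crossing pair. First I would recall the structure of the edge set: since $H$ is obtained from $G$ by deleting all crossing pairs, we have $E(G) = E(H) \cup C$ with $E(H) \cap C = \emptyset$, where $C$ is the set of crossing edges. By the stated $1$--$1$ correspondence, each quadrangle $f$ of $H$ accounts for exactly one crossing pair, i.e.\ exactly two edges of $C$, and since the two edges of a crossing pair join opposite corners of the simple quadrangle $f = \angles{z_0,z_1,z_2,z_3}$, these two edges are exactly the pairs $z_0z_2$ and $z_1z_3$ of non-adjacent boundary vertices. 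In other words, the two possible chords of $f$ \emph{are} the two crossing edges associated with $f$.

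Next I would use this identification to define the partition. By hypothesis we are given, for each quadrangle $f$, a chord $c_f$ (one of the two diagonals of $f$) such that the set $B := \{\, c_f : f \text{ a quadrangle of } H \,\}$ induces a forest. I then set $A := E(G) \setminus B$. Since $G[B]$ is a forest by assumption, it remains only to verify that $A \cup B = E(G)$ is a genuine partition and that $G[A]$ is planar. The partition claim is immediate: every edge of $G$ lies in at most one crossing pair, so each crossing edge is a diagonal of exactly one quadrangle; hence the $c_f$ are distinct, each $c_f$ lies in $C$ (so outside $E(H)$), and $A \cap B = \emptyset$. Consequently $A = E(H) \cup (C \setminus B)$, i.e.\ $A$ consists of all of $E(H)$ together with, for each quadrangle $f$, the one diagonal of $f$ that was \emph{not} chosen as $c_f$.

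The heart of the argument—and the step I expect to require the most care—is exhibiting a planar drawing of $G[A]$. I would start from the given planar drawing of $H$ and, for each quadrangle $f$, draw the un-chosen diagonal as a chord routed through the region of $f$. This is legitimate because a single chord can always be drawn through its face without introducing a crossing (the original crossing in $G$ arose only because \emph{both} diagonals passed through $f$, whereas now we keep just one), and because distinct quadrangles are distinct faces of $H$, no two of these added chords can interfere. Since each face of $H$ thereby receives at most one new edge, planarity is preserved throughout, yielding a planar drawing of $G[A]$. Combined with the forest $G[B]$, this gives the desired PGF-partition of $G$ and completes the proof.
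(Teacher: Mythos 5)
Your proof is correct and matches the intended argument behind Lemma~\ref{adding:chords:lemma} (which the paper cites from Ackerman rather than re-proving): via the 1--1 correspondence between quadrangles of $H$ and crossing pairs of $G$, the two diagonals of each quadrangle are exactly the two crossing edges, so the chosen chords form the forest $B$ while $E(H)$ plus the one leftover diagonal per quadrangle-face admits a planar drawing with at most one chord routed through each face. No gaps; the identification of chords with crossing edges and the one-chord-per-face planarity argument are precisely the standard route.
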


Thus, in order to prove the existence of a PGF-partition,
it suffices to show (typically by induction on the number of quadrangles)
that such a set of chords can be found.
For the induction to go through, Ackerman additionally forbids
the chords from containing a path
between two adjacent pre-specified vertices $x,y$.

\begin{theorem}[Ackerman \cite{2014-Ackerman-1PlanarSplit}]
    Let $H$ be a plane multigraph without loops such that every face has degree
    at most four and all quadrangles are simple.
    Let $x,y$ be a pair of adjacent vertices of $H$.
    Then we can add a chord to every simple quadrangle of $H$ such that the
    subgraph induced by the chords is a forest and
    does not contain a path between $x$ and $y$.
\end{theorem}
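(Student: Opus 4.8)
The plan is to induct on the number $q$ of (simple) quadrangles of $H$, after first reformulating the target condition. Since $x$ and $y$ are adjacent, the edge $xy$ lies in $H$, and a set of chords simultaneously avoids a cycle and an $x$--$y$ path if and only if the chords together with the single edge $xy$ form a forest: any cycle in ``chords\,$\cup\{xy\}$'' either lies entirely in the chords or uses $xy$ once together with a chord-path from $x$ to $y$. So I would prove the equivalent statement that one can pick a diagonal of every simple quadrangle so that the chosen diagonals together with $xy$ are acyclic. For the base case $q=0$ there is nothing to add and $\{xy\}$ is a forest.

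For the inductive step, fix any quadrangle $f$ with facial cycle $z_0,z_1,z_2,z_3$; its two candidate diagonals are $z_0z_2$ and $z_1z_3$. Call a diagonal \emph{safe} if (i) it is not already an edge of $H$ and (ii) its two endpoints are not both incident to any other quadrangle of $H$. First I would show that a safe diagonal always exists. Because $f$ is a simple quadrangle, its facial cycle is a simple closed curve that, on the sphere, separates the empty disk $f$ from a complementary disk $D$ carrying $z_0,z_1,z_2,z_3$ on its boundary in this cyclic order. Any realization of ``$z_0z_2$'' (a parallel edge, or a length-two boundary path of another quadrangle having $z_0,z_2$ as opposing vertices) and any realization of ``$z_1z_3$'' both live inside $D$ and connect interleaving boundary points, so by the Jordan-curve property they would have to cross; since $H$ is plane this is impossible. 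Consequently at most one diagonal is a pre-existing edge, at most one diagonal has its endpoints opposing on another quadrangle, and a pre-existing diagonal edge already blocks the \emph{other} diagonal from exhibiting either defect. Hence at least one diagonal is safe, and I would pick such a $d$ as the chord of $f$.

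Next I would contract the safe diagonal $d$ through $f$ (per the contraction defined earlier), deleting the resulting loop, to obtain $H'$ with merged vertex $c$. I expect $H'$ to again satisfy the hypotheses: contracting through $f$ destroys only $f$ and introduces only bigons around $c$, so every surviving face keeps its degree and all faces still have degree at most four; safety~(i) guarantees no real edge becomes a loop; and safety~(ii) guarantees no other quadrangle acquires a repeated vertex, so all quadrangles of $H'$ remain simple. Moreover $H'$ has exactly $q-1$ quadrangles. I would then take $x'y'$ to be the image of $xy$ under the contraction; this is still an edge, and $x'\neq y'$ because $xy\neq d$ (safety~(i) makes $z_0z_2$ a non-edge, so $\{x,y\}\neq\{z_0,z_2\}$). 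Applying the induction hypothesis to $(H',x'y')$ yields chords $C'$ with $C'\cup\{x'y'\}$ a forest in $H'$, and I would output $C'\cup\{d\}$. By the standard correspondence between contraction and acyclicity, $C'\cup\{x'y'\}$ being acyclic in $H'=H/d$ is exactly $C'\cup\{d\}\cup\{xy\}$ being acyclic in $H$, which is the desired conclusion.

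The main obstacle I anticipate is the second step, the existence of a safe diagonal: the clean statement hides a genuine planarity argument, and the degenerate multigraph cases (parallel diagonal edges, or two candidate quadrangles meeting at a shared ``middle'' vertex whose rotation must then be analyzed) need to be ruled out carefully so that the interleaving/crossing contradiction really applies. Once safety is established, the preservation of the hypotheses under contraction and the acyclicity bookkeeping are routine, and the $x,y$ restriction is carried through the induction essentially for free as the image pair $x'y'$.
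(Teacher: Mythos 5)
There is a genuine gap: your central claim that every quadrangle admits a \emph{safe} diagonal is false, and the configuration that kills it is exactly the one that forces the hard case in Ackerman's proof. Consider two quadrangles $f$ and $f'$ that share all four vertices, with $z_0,z_2$ opposing in both and $z_1,z_3$ opposing in both; the minimal example is $H=C_4$ drawn in the plane, whose two faces are both simple quadrangles with facial cycle $\langle z_0,z_1,z_2,z_3\rangle$ (this is Case~3 of Ackerman's proof; see Figure~\ref{contract:z1:z3}(e)). Here \emph{both} diagonals violate your condition~(ii): the endpoints of $z_0z_2$ both lie on the other quadrangle $f'$, and so do the endpoints of $z_1z_3$. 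Your Jordan-curve argument does not exclude this, because the two ``realizations'' are boundary paths of the \emph{same} face $f'$ (e.g.\ $z_0$--$z_1$--$z_2$ and $z_1$--$z_2$--$z_3$); these share vertices and edges along the boundary of the disk $D$ rather than being interior-disjoint curves joining interleaved boundary points, so no crossing is forced. What planarity actually yields (Lemma~\ref{quadrangle:lemma:2} of the paper) is only that the two blocking faces cannot be \emph{distinct} faces both different from $f$; the case where they coincide survives, and your parenthetical ``at most one diagonal has its endpoints opposing on another quadrangle'' is exactly where the proof breaks.

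This surviving case genuinely defeats your induction rather than being a degeneracy to rule out: with no safe diagonal available, contracting either opposing pair (say $z_1,z_3$) turns $f'$ into a non-simple quadrangle (its facial cycle becomes $\langle z_0,c,z_2,c\rangle$ for the merged vertex $c$), so the induction hypothesis---which requires all quadrangles simple---no longer applies, yet $f'$ still needs a chord. This is precisely why Ackerman, in this case, splits the graph along $f$ and $f'$ into four subgraphs $H_0,\dots,H_3$ and recurses on each with a fresh forbidden pair, and why the present paper instead proves the stronger Theorem~\ref{main_theorem}, which tolerates non-simple quadrangles (and forbids chord-paths between \emph{all} adjacent pairs) so that no splitting is needed. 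Your reformulation via ``chords together with $xy$ form a forest'' and your contraction bookkeeping are fine, but to complete the argument you would have to import one of these two mechanisms; as written, the proposal cannot even handle $H=C_4$.
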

\begin{proof}
    Proceed by induction on the number of quadrangles in $H$.
    If $H$ has no quadrangles, then the statement is trivial. Otherwise, let $f$ be
    a quadrangle with facial cycle $\angles{z_0, z_1, z_2, z_3}$; by assumption
    $f$ is simple.

    \textbf{Case 1:} The only face containing $z_0$ and $z_2$ is $f$; in
    particular $z_0$ and $z_2$ are not adjacent. Contract $z_0$ and $z_2$
    through $f$. Let $H'$ be the graph resulting from this contraction. Observe that
    $H'$ has one fewer quadrangle than $H$. All other quadrangles remain
    simple since $f$ was the only face containing $z_0$ and $z_2$.
    Apply induction on $H'$ with the same pair of adjacent vertices $x,y$ to receive
    a set of chords $C'$, and then further add the chord $z_0z_2$.
    Chords have now been added to every quadrangle of $H$, and it is easy
    to see that we have not added a cycle or a path from $x$ to $y$ in the chords.

    \textbf{Case 2:} There is some face $f'\neq f$ containing $z_0$ and $z_2$,
    but the only face containing $z_1$ and $z_3$ is $f$. Proceed as in
    Case~1, except contract $z_1$ and $z_3$.

    \textbf{Case 3:} None of the above. Then there is a face $f' \neq f$ containing
    $z_0$ and $z_2$, and there is a face $f'' \neq f$ containing $z_1$ and $z_3$.
    Ackerman argues that $f' = f''$ (see also Lemma~\ref{quadrangle:lemma:2}),
    and therefore $f'$ is also a simple quadrangle.
    Observe that $G$ can be split into four connected subgraphs $H_0,H_1,H_2,H_3$,
    where $H_i$ contains $z_i$ and $z_{i+1}$ (addition modulo 4) on the boundary
    (see Figure~\ref{contract:z1:z3}(e)).
    One of these subgraphs, say $H_0$, will contain the adjacent pair $x,y$.
    Apply induction on $H_0$ with $x,y$, and apply induction on the other $H_i$ with
    the pair $z_i,z_{i+1}$. After induction, add the chord $z_0z_2$ in $f$,
    and $z_1z_3$ in $f'$. One verifies that the added chords to not add
    a path between $x$ and $y$ and that the chords to not create a cycle.
\end{proof}

\section{An Alternate Existence Proof}
\label{existence}

While Ackerman's proof clearly leads to a polynomial time algorithm
for finding the partition, it is not obviously linear since
distinguishing between the cases and contracting are not obviously doable
in constant time:
\begin{enumerate}
    \item We need to test whether a given pair of vertices share more than one face.
    \item The graph changes via contractions, and it is not obvious
        whether the existing data structures for efficiently contracting edges in planar graphs
        (e.g.~\cite{2017-ESA-PlanarGraphContraction}) would support (1)
        in constant time.
    \item In Case~3 of Ackerman's proof, we need to identify the four
        subgraphs $H_0,H_1,H_2,H_3$.  Furthermore, we need to determine which of these
        subgraphs contains the pair $x,y$.  Neither operation is obviously
	doable in constant time. 
\end{enumerate}

We now give a different proof of the existence of a PGF-partition that
either avoids these issues or addresses explicitly how to resolve them.
The biggest change is how we handle Case~3 of Ackerman's proof. Ackerman used
here a split into four graphs, which is necessary in order to maintain that
all quadrangles are simple. We prove a more general statement that
permits non-simple quadrangles and hence avoids having to split the graph. Moreover, we generalize Ackerman's
\say{forbidden pair} $x$ and $y$ by choosing chords in such a way that chords
do not induce a path between {\em any} pair of vertices that were adjacent
in the original graph $H$.
Doing so simplifies the induction since we no longer need to keep track
of where the vertices $x$ and $y$ are.
Before we state this result 
we need
a few helper-results that hold for all quadrangles (simple or not).
They are easily proved by finding (after minor modifications) a planar drawing of $K_5$
if the conclusion is violated; see the appendix for details.

\begin{lemma}
    \label{quadrangle:lemma:1}
    Let $H$ be a plane multigraph without loops, let $f$ be a quadrangle of $H$,
    and let $\angles{z_0,z_1,z_2,z_3}$ be the facial cycle of $f$.
    If $z_i=z_{i+2}$ (addition module 4) for some $i$,
    then $z_{i+1} \neq z_{i+3}$, and there is no face $f'\neq f$ which contains
    $z_{i+1}$ and $z_{i+3}$.
\end{lemma}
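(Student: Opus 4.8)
My plan is to exploit the fact that a vertex occurring twice on a single facial walk must be a cut vertex, and then to place $z_{i+1}$ and $z_{i+3}$ on opposite sides of that cut vertex. By symmetry I will assume $i=0$, so I write $a=z_0=z_2$, $b=z_1$, $c=z_3$, and let $e_1=z_0z_1,e_2=z_1z_2$ be the two boundary edges of $f$ joining $a$ and $b$, and $e_3=z_2z_3,e_4=z_3z_0$ the two joining $a$ and $c$ (no loops forces $b,c\neq a$). The first observation is that, since $a$ occurs twice on the boundary walk of the single face $f$, the vertex $a$ is a cut vertex of the connected graph $H$, and the two angles of $f$ at $a$ are exactly $z_0$ (between $e_4$ and $e_1$) and $z_2$ (between $e_2$ and $e_3$).

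Next I would dispose of the possibility $z_1=z_3$. If $b=c$, then all four boundary edges $e_1,e_2,e_3,e_4$ join $a$ and $b$ and the facial walk of $f$ reads $\angles{a,b,a,b}$. No such edge can be a bridge: a bridge $ab$ would be the unique $a$--$b$ link, yet the walk forces a second, parallel $a$--$b$ edge to fill the remaining boundary slots. Hence the four slots are four \emph{distinct} parallel edges between $a$ and $b$. But four parallel edges bound only bigons, each incident to just two consecutive edges, so no face---in particular not the degree-$4$ face $f$---can be incident to all four of them; equivalently, the two bigons $e_1\cup e_2$ and $e_3\cup e_4$ would bound $f$ on either side and force $f$ to be a bigon rather than a quadrangle. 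This contradiction yields $z_1\neq z_3$.

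Finally, with $b\neq c$ in hand, I would prove that $b$ and $c$ lie in different components of $H-a$. The closed curves $\kappa_1=e_1\cup e_2$ and $\kappa_2=e_3\cup e_4$ pass through $\{a,b\}$ and $\{a,c\}$ respectively; since $b\neq c$ they meet only at $a$, so on the sphere they form a figure-eight whose two lobes bound closed regions $R_1,R_2$ with $b\in R_1$, $c\in R_2$ and $R_1\cap R_2=\{a\}$, while $f$ is the face outside both lobes (it is the unique region incident to all four edges). As all of $H$ lies in $R_1\cup R_2$, deleting $a$ separates $b$ from $c$. Now any face $f'$ whose boundary contains both $b$ and $c$ induces a $b$--$c$ walk in $H$, which must pass through $a$; there the walk crosses from the $b$-side to the $c$-side, so $f'$ occupies an angle of $a$ lying between an edge of $\{e_1,e_2\}$ and an edge of $\{e_3,e_4\}$. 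The only such angles are $z_0$ and $z_2$, both belonging to $f$, whence $f'=f$. The step I expect to be most delicate is the degenerate part (and, inside the last argument, the bookkeeping when $e_1=e_2$ or $e_3=e_4$, i.e.\ when a lobe collapses to a single bridge); this is exactly where, alternatively, one can reach the contradiction by inserting the missing chords into the offending faces and contracting the separated pieces to exhibit a planar drawing of $K_5$, as the paper suggests for these helper lemmas.
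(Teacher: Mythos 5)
Your proof is correct, and it takes a genuinely different route from the paper's. The paper proves both conclusions by augmenting the drawing --- subdividing a $z_0z_3$ edge, adding a chord $yz_1$ inside $f$, stellating the resulting simple quadrangle, and routing $z_1z_3$ through the hypothetical $f'$ --- so that five pairwise adjacent vertices appear in a plane drawing, contradicting non-planarity of $K_5$. You instead argue directly on the topology of the drawing: with $a=z_0=z_2$, the closed curves $e_1\cup e_2$ and $e_3\cup e_4$ form a figure-eight meeting only at $a$, with lobes $R_1\ni z_1$, $R_2\ni z_3$ and $R_1\cap R_2=\{a\}$; since $H$ is connected (a standing assumption of the paper, which your argument genuinely uses) all of $H$ lies in $R_1\cup R_2$, so the facial walk of any face containing both $z_1$ and $z_3$ must switch lobes at $a$, and the rotation at $a$ --- necessarily $e_4,e_1,[\text{lobe-1 edges}],e_2,e_3,[\text{lobe-2 edges}]$ --- admits only two mixed angles, namely the two angles of $f$, whence $f'=f$. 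What your approach buys is a self-contained argument that needs no appeal to $K_5$ and that exposes the structural reason the lemma holds ($a$ is a cut vertex separating $z_1$ from $z_3$); what the paper's construction buys is uniformity --- the subdivision step sidesteps exactly the degeneracies you must track, and the same trick disposes of Lemma~\ref{quadrangle:lemma:2}. Two loose ends in your write-up deserve closing, though both close immediately. First, the flagged collapse $e_1=e_2$: a facial walk that enters and leaves $z_1$ by the same edge forces $d(z_1)=1$, so $f$ is the \emph{only} face containing $z_1$ and the conclusion is trivial (symmetrically for $e_3=e_4$). Second, your bridge sentence in the $z_1=z_3$ case is slightly garbled as stated; the clean version is that a facial walk traverses an edge twice only if that edge is a bridge, no edge in a parallel class of size at least two is a bridge, hence the four traversals of $\angles{a,b,a,b}$ use four distinct parallel $ab$-edges --- and four internally disjoint $ab$-arcs cut the sphere into regions each bounded by only two of them, so no single face, let alone the degree-4 face $f$, is incident to all four. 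With these repairs your argument is complete and fully rigorous.
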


The following lemma was shown (without being stated explicitly) in
Case~3 of Ackerman's proof.

\begin{lemma}
    \label{quadrangle:lemma:2}
    Let $H$ be a plane multigraph without loops, let $f$ be a quadrangle of $H$,
    and let $\angles{z_0,z_1,z_2,z_3}$ be the facial cycle of $f$.
    If $z_i$ and $z_{i+2}$ (addition modulo 4) for some $i$ are both on some face
    $f' \neq f$, then no face $f'' \neq f,f'$ contains both $z_{i+1}$ and $z_{i+3}$.
\end{lemma}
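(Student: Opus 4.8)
The plan is to argue by contradiction: assuming that, in addition to the given face $f'\neq f$ on which $z_i$ and $z_{i+2}$ both lie, there is a third face $f''\neq f,f'$ containing both $z_{i+1}$ and $z_{i+3}$, I would exhibit a $K_5$ inside a planar graph, contradicting planarity of $H$. This mirrors the (commented-out) argument for Lemma~\ref{quadrangle:lemma:1}. After relabelling I take $i=0$, so the hypothesis is that $z_0,z_2$ lie on some $f'\neq f$ and the assumption-for-contradiction is that $z_1,z_3$ lie on some $f''\neq f,f'$.

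First I would dispose of the degenerate configurations using Lemma~\ref{quadrangle:lemma:1}. Since $H$ has no loops, consecutive corners of the facial cycle differ. If $z_0=z_2$, then Lemma~\ref{quadrangle:lemma:1} (instance $i=0$) already guarantees that no face other than $f$ contains both $z_1$ and $z_3$, so the conclusion holds outright. The hypothesis also rules out $z_1=z_3$: by Lemma~\ref{quadrangle:lemma:1} (instance $i=1$), $z_1=z_3$ would force that no face other than $f$ contains both $z_0$ and $z_2$, contradicting the existence of $f'$. Hence I may assume $z_0\neq z_2$ and $z_1\neq z_3$, so all four corners are distinct and $f$ is a simple quadrangle.

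Next I would build the forbidden $K_5$ on the vertex set $\{z_0,z_1,z_2,z_3,c\}$. The four boundary edges $z_0z_1,z_1z_2,z_2z_3,z_3z_0$ already exist. I would stellate $f$, introducing a new vertex $c$ joined to each of $z_0,z_1,z_2,z_3$; this preserves planarity since $f$ is a simple face. Because $f,f',f''$ are pairwise distinct faces, I can then independently add the chord $z_0z_2$ through $f'$ and the chord $z_1z_3$ through $f''$ (in each case only if the endpoints are not already adjacent), and both insertions again preserve planarity. After these modifications all ten pairs among $z_0,z_1,z_2,z_3,c$ are adjacent, so the resulting planar graph contains $K_5$ — the desired contradiction.

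The point that requires care, and which I would flag as the main obstacle, is ensuring that $f,f',f''$ are genuinely three distinct faces and that the added pieces do not interfere: the stellation vertex $c$ lives inside $f$, while the two chords are routed through the disjoint faces $f'$ and $f''$, so no two of the new edges cross. The hypothesis $f''\neq f,f'$ is exactly what guarantees this separation, and combined with the reduction to the non-degenerate case it makes the $K_5$ appear cleanly.
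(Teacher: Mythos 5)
Your proposal is correct and follows essentially the same route as the paper's own proof: invoke Lemma~\ref{quadrangle:lemma:1} to rule out $z_0=z_2$ and $z_1=z_3$, then stellate $f$ with a new vertex $c$ and route the edges $z_0z_2$ through $f'$ and $z_1z_3$ through $f''$ to exhibit a planar $K_5$, a contradiction. Your explicit case analysis of the degenerate configurations is slightly more detailed than the paper's one-line appeal to Lemma~\ref{quadrangle:lemma:1}, but the argument is the same.
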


\begin{figure}[ht]
    \centering
    \includegraphics[scale=0.65,page=1]{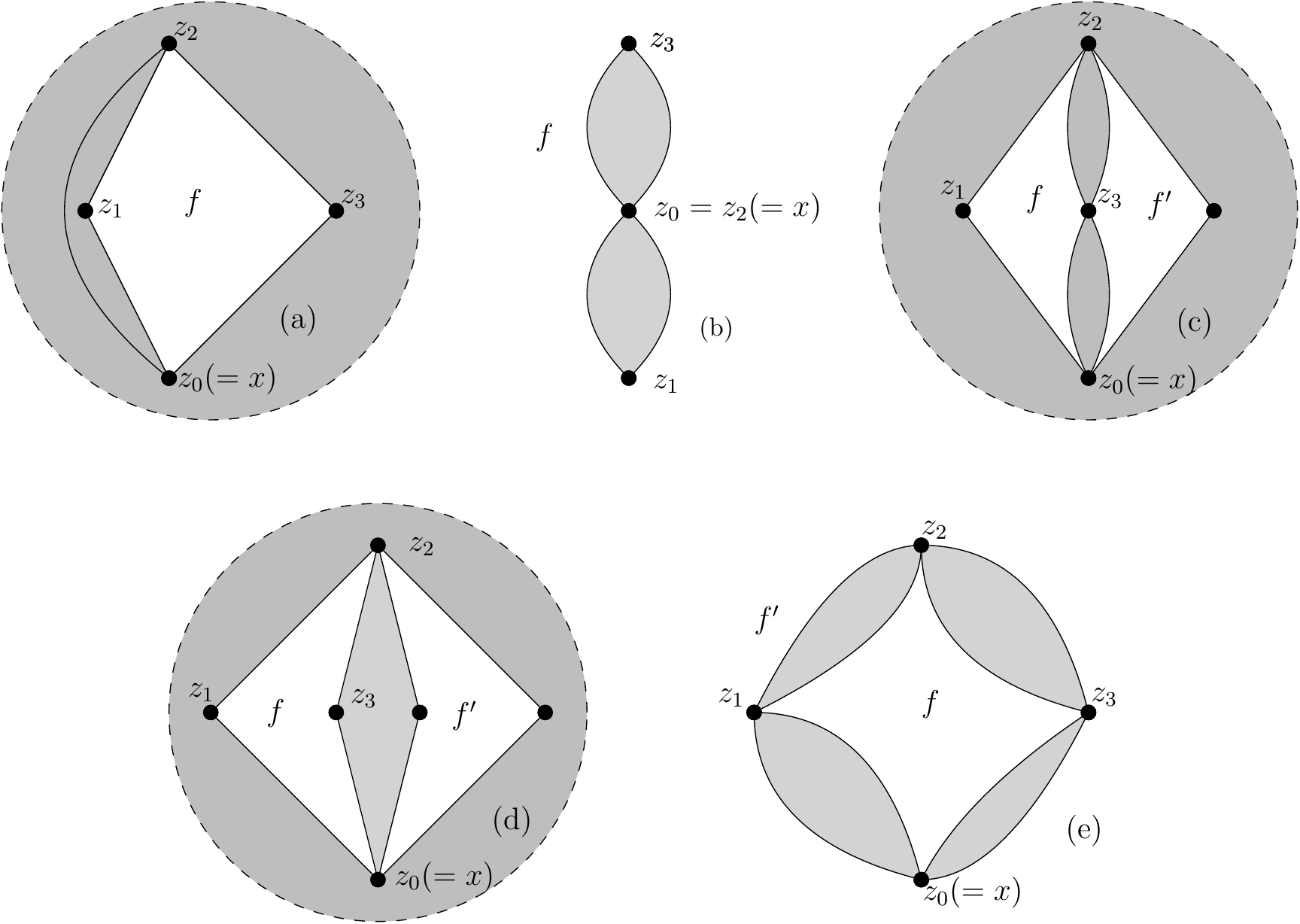}
    \caption{Some configurations where we contract $z_1$ and $z_3$}
    \label{contract:z1:z3}
\end{figure}

Now we reprove the existence of quadrangle-chords that form a forest.

\begin{theorem}
    \label{main_theorem}
    Let $H$ be a plane multigraph without loops such that every face has
    degree at most 4. Then it is possible to add a chord to every quadrangle of $H$
    such that the graph induced by the chords is a forest. Moreover,
    for any pair $a$ and $b$ of adjacent vertices of $H$, there is no
    path from $a$ to $b$ in the chords.
\end{theorem}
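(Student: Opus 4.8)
The plan is to induct on the number of quadrangles of $H$, following Ackerman's scheme but exploiting our weaker hypotheses—non-simple quadrangles are now allowed, and the forbidden-path condition is strengthened to hold for \emph{every} adjacent pair—so that a single contraction suffices and the graph never has to be split. In the base case $H$ has no quadrangles and the empty chord set trivially works. For the inductive step I would fix a quadrangle $f=\angles{z_0,z_1,z_2,z_3}$ and realize exactly one of its two diagonals as a chord.

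The first key step is to show that at least one diagonal, say $\{z_0,z_2\}$, consists of two \emph{distinct} and \emph{non-adjacent} vertices; this is precisely what is needed for $z_0z_2$ to be a legal chord and for contracting $z_0,z_2$ to create no loop. If $z_0=z_2$, then Lemma~\ref{quadrangle:lemma:1} forces $z_1\neq z_3$ and that no face other than $f$ contains both $z_1,z_3$, so $z_1,z_3$ are in particular non-adjacent and $\{z_1,z_3\}$ is legal; the symmetric statement handles $z_1=z_3$. Otherwise all four vertices are distinct (the boundary edges of $f$ are not loops), and the only remaining obstruction is that \emph{both} diagonal pairs are adjacent. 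I would rule this out by planarity: the $4$-cycle $z_0z_1z_2z_3$ bounding $f$ cuts the sphere so that both putative diagonal edges must be drawn on the side not containing $f$, where the chords $z_0z_2$ and $z_1z_3$ separate each other's endpoints and so cannot both be drawn without crossing (this can also be read off from Lemma~\ref{quadrangle:lemma:2}).

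Having chosen a distinct non-adjacent diagonal $\{z_0,z_2\}$, I would add the chord $z_0z_2$, contract $z_0$ and $z_2$ through $f$ to obtain $H'$, and apply induction to $H'$. The crucial observation—and the reason we must permit non-simple quadrangles—is that this contraction destroys $f$ (pinching it into two bigons) while leaving every other face at its old degree: a face meeting only one of $z_0,z_2$ is unaffected, and since $z_0,z_2$ are non-adjacent, any face $f'\neq f$ meeting both of them is necessarily a quadrangle in which $z_0,z_2$ are opposing, and it keeps degree $4$ but degrades to a \emph{non-simple} quadrangle $\angles{c,z_1',c,z_3'}$. Hence $H'$ is again a loopless plane multigraph with all face-degrees at most $4$ and with exactly one fewer quadrangle, so the induction is well-founded and Ackerman's four-way split is avoided entirely. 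Moreover every quadrangle of $H'$ corresponds to a quadrangle of $H$, so the chords $C'$ returned by induction are chords of $H$; the only chords incident to the new vertex $c$ belong to quadrangles meeting exactly one of $z_0,z_2$, so they lift unambiguously to chords at $z_0$ or at $z_2$.

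Finally I would verify the two properties of $C=C'\cup\{z_0z_2\}$ by lifting along the contraction. For acyclicity, a cycle in $C$ must use the new chord, leaving a $C'$-path between $z_0$ and $z_2$; contracting $z_0,z_2\mapsto c$ turns this path into a closed walk in $C'$ that is a genuine cycle because it has length at least two (a single $C'$-chord joining $z_0$ and $z_2$ would become a forbidden loop at $c$), contradicting that $C'$ is a forest. For the path condition, suppose $a,b$ are adjacent in $H$ and joined by a $C$-path $Q$. Since $z_0,z_2$ are non-adjacent we cannot have $\{a,b\}=\{z_0,z_2\}$, so the images of $a$ and $b$ stay adjacent in $H'$; projecting $Q$ and discarding the new chord (which becomes a loop at $c$) yields a $C'$-path between those images, contradicting the inductive path condition on $H'$. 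I expect the main obstacle to be the face-degree bookkeeping under ``contraction through $f$'': one must argue carefully that no other face is split and that each shared face merely degrades to a non-simple quadrangle, which is exactly where the generalization over Ackerman's statement earns its keep.
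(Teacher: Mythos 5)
Your proof is correct, and its inductive skeleton is the same as the paper's: induct on the number of quadrangles, contract a distinct non-adjacent diagonal pair through the quadrangle, tolerate the resulting non-simple quadrangles instead of splitting the graph, and verify the forest and no-path-between-adjacent-pairs conditions by projecting paths and cycles onto $H' = H/e$ (your lifting arguments are essentially verbatim the paper's, including the observation that a $C'$-chord joining $z_0$ and $z_2$ would be a loop at $c$). Where you genuinely diverge is the selection rule. You pick \emph{any} legal diagonal and prove one exists: Lemma~\ref{quadrangle:lemma:1} handles a coincident pair, and the all-distinct case is settled by your direct Jordan-curve argument that two chords with interleaved endpoints on the 4-cycle bounding $f$ would have to cross in the disk complementary to $f$. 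That argument is sound and self-contained; note, though, that your parenthetical claim that this ``can also be read off from Lemma~\ref{quadrangle:lemma:2}'' is not quite immediate, since Lemma~\ref{quadrangle:lemma:2} still permits $f''=f'$, i.e.\ both diagonal edges bounding one common face, and ruling that out needs an extra step (the facial walk of such an $f'$ would be disconnected) -- your crossing argument is what actually does the work. The paper instead anchors the choice at a distinguished vertex $x$ and uses the asymmetric Case~1/Case~2 rule certified by Table~\ref{my:table} (contract $z_1,z_3$ precisely when $z_2$ equals, is adjacent to, or opposes $x$ in another quadrangle). For pure existence your version is cleaner -- it dispenses with most of the table and with the ``opposing in another quadrangle'' tests, since contracting a pair that opposes in a second quadrangle merely creates a $\angles{c,w_1,c,w_3}$ quadrangle that Lemma~\ref{quadrangle:lemma:1} handles at the next step. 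What the paper's more rigid rule buys is the algorithm of Section~\ref{algorithm}: processing all quadrangles at a hub $x$ lets Procedure~\ref{main_algorithm} decide the case in constant time via the \texttt{adj} and \texttt{opposing} counters, whereas ``pick any legal diagonal'' gives no obvious constant-time test for legality, so the linear-time implementation would still need the paper's formulation.
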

\begin{proof}
    As in Ackerman's proof, we prove the claim by induction on the number of
    quadrangles in $H$ and remove each quadrangle by contracting an opposing
    pair of vertices in the quadrangle. If $H$ has no quadrangles, the claim is trivial.
    Otherwise, there are quadrangles left. Pick an arbitrary vertex $x$ that
    is still incident to some quadrangles (later in our algorithm we will
    deal with all its incident quadrangles). Let $f$ be one of its incident
    quadrangles with facial cycle $\angles{x=z_0, z_1, z_2, z_3}$.
    We first pick two opposing vertices of $f$ to contract.

\medskip\noindent\textsf{\textbf{Case 1:}} This case covers when we choose to contract $z_1$ and $z_3$,
    and has three sub-cases.  We contract $z_1$ and $z_3$ whenever
    \begin{description}
        \item[Case 1.a] $x=z_2$ are the same vertex, or
        \item[Case 1.b] $x$ and $z_2$ are adjacent, or
        \item[Case 1.c] $x$ and $z_2$ are opposing vertices of some quadrangle $f' \neq f$.
    \end{description}

    Figure~\ref{contract:z1:z3} illustrates possible configurations of face $f$
    where Case~1 applies.
    Note that Case~1.c covers Case~3 of Ackerman's proof, where
    $x,z_1,z_2,z_3$ all belong to two simple quadrangles $f,f'$
    (see also Figure~\ref{contract:z1:z3}(e)). Our contraction turns $f'$ into
    a non-simple quadrangle, but our proof can handle this.

\medskip\noindent\textsf{\textbf{Case 2:}} Otherwise, we contract $x$ and $z_2$.

\medskip
    Table~\ref{my:table} demonstrates when we pick Case~1 and when we pick Case~2,
    and crucially shows cases which are impossible by Lemmas~\ref{quadrangle:lemma:1}
    and~\ref{quadrangle:lemma:2}. To see that these lemmas apply in second row
    and column, observe that adjacent vertices always share face, and in particular
    if two opposing vertices are adjacent then they must share a face other than
    the quadrangle they are opposing in.

\begin{table}[ht]  \centering
    \begin{tabular}{c|c|c|c|c}
         & $x=z_2$
         & \makecell{$x \neq z_2$; \\ $(x,z_2) \in E(H)$}
         & \makecell{$x \neq z_2$; \\ $x$, $z_2$ are \\ opposing in $f'$}
         & Otherwise
        \\ \hline
        $z_1=z_3$ & \makecell{Impossible \\ (Lemma~\ref{quadrangle:lemma:1})}
          & \makecell{Impossible \\ (Lemma~\ref{quadrangle:lemma:1})}
          & \makecell{Impossible \\ (Lemma~\ref{quadrangle:lemma:1})} & Case 2
        \\ \hline
        \makecell{$z_1\neq z_3$; \\  $(z_1,z_3)\in E(H)$}
          & \makecell{Impossible \\ (Lemma~\ref{quadrangle:lemma:1})}
          & \makecell{Impossible \\ (Lemma~\ref{quadrangle:lemma:2})}
          & \makecell{Impossible \\ (Lemma~\ref{quadrangle:lemma:2})} & Case 2
        \\ \hline
        \makecell{$z_1 \neq z_3$; \\ $z_1$, $z_3$ are \\ opposing in $f''$}
          & \makecell{Impossible \\ (Lemma~\ref{quadrangle:lemma:1})}
          & \makecell{Impossible \\ (Lemma~\ref{quadrangle:lemma:2})}
          & \makecell{
              Impossible if $f' \neq f''$ \\
              (Lemma~\ref{quadrangle:lemma:2}), \\
              Case 1.c otherwise
          }
          & Case 2
        \\ \hline
        \rule{0pt}{12pt} 
        Otherwise & Case 1.a & Case 1.b & Case 1.c & Case 2
    \end{tabular}
    \caption{All possible cases for the quadrangle $f$ with facial cycle
        $\angles{x,z_1,z_2,z_3}$.  We either indicate
        which case in the proof of Theorem~\ref{main_theorem} would be
        chosen, or indicate the lemma that demonstrates that this case is impossible.
    }
    \label{my:table}
\end{table}

    Let $z_i, z_{i+2}$ be two vertices chosen for contraction and
    let $H'$ be the graph resulting from the contracting $z_i$ and $z_{i+2}$. By
    Table~\ref{my:table}, $z_i$ and $z_{i+2}$ are not adjacent, and
    they are unique.  Therefore our
    contraction has destroyed the quadrangle $f$ and $H'$ has no loops, so we
    can apply induction on $H'$. Let $C'$ be the set of chords added to $H'$.
    By the inductive hypothesis, $C'$ induces a forest and for any edge $ab \in H'$,
    there is no path from $a$ to $b$ in $C'$.
    Uncontract $z_i$ and $z_{i+2}$, and add a chord $e:=z_iz_{i+2}$
    between them. Define $C := C' \cup \{e\}$. We
    now verify that $C$ satisfies all conditions.

    Let $a,b$ be a pair of adjacent vertices of $H$. Assume by way of contradiction
    that there is an path from $a$ to $b$ in $C$. By the inductive hypothesis,
    the path must use $e$. Furthermore, $e$ cannot be the edge $ab$
    since $z_i$ and $z_{i+2}$ are not adjacent,
    so the path must use some edges from $C'$. Let
    $c_1, \ldots, c_{k_1}, e, c_{k_1+1}, \ldots, c_{k_2}$ be this path, $k_2 \geq 1$.
    But then $c_1, \ldots, c_{k_2}$ would be a path from $a$ to $b$ within
    $C'$ in $H'=H/e$, a contradiction.

    Assume by way of contradiction that $C$ induces a cycle in $H$.
    Since $e$ is not a loop in $H$, the cycle must use edges from $C'$. Let
    $e, c_1, \ldots, c_k$ be the cycle, $k \geq 2$.  Then $c_1, \ldots, c_k$
    would induce a cycle within $C'$ in $H'=H/e$, a contradiction.
\end{proof}

\section{Efficient Implementation}
\label{algorithm}

It is still not immediately clear how one would implement the above theorem
in order to achieve linear runtime, since as in Ackerman's proof we need to
repeatedly test how many quadrangles two vertices share. However, if
we are more careful about the order in which we contract each quadrangle,
an efficient implementation can be achieved.
The crucial idea will be to pick some vertex $x$ and contract all quadrangles
incident to $x$. This will allow us to store additional information relative
to $x$ and hence speed up testing which case applies.

\subsection{Data Structure Interface}

As mentioned earlier, one of the major ingredients to achieve fast run-time
is to use the data structure by Holm et al.~\cite{2017-ESA-PlanarGraphContraction}
for contraction in planar graphs, but we will also provide a
(simpler but slower) alternative.  We will discuss these later 
(in Subsection~\ref{contract:runtime}) when we
analyze the run-time, but note here the two operations provided
by \cite{2017-ESA-PlanarGraphContraction} that will be needed:
\begin{itemize}
    \item $x = \texttt{contract}(e)$ takes a reference to an edge $e$, 
        contracts $e$, and returns the vertex resulting from the contraction.

	Note that contracting $e$ creates a loop in the graph, especially
	if there are multiple copies of this edge, while the proof of
	Theorem~\ref{main_theorem} assumed that the graph has no loops.    We could remove 
	loops (the data structure by Holm et al.~can report newly created loops 
	after \texttt{contract}), but this turns out to be unnecessary:  We will
	only contract edges at artificial gadgets inserted into the graph, and
	the created loops are at quadrangles that are destroyed afterwards and
	those will not pose problems.
    \item \texttt{neighbors}$(x)$ returns an iterator over
        $\setb{\angles{xv, v}}{xv \in E}$ where $E$ is the edge set
        of the graph, i.e. tuples
        containing each edge incident to $x$ and the endpoint of this edge.
        No guarantee is given as to the order of the neighbors.
        \todo{SB: we haven't defined $H$ or $H^\diamondsuit$ yet, removed
        references to them in this section.}

	We assume that $\texttt{neighbors}(x)$ has $O(1)$ runtime and that
	the returned list can be iterated over in $O(d(x))$ time (recall that
	$d(x)$ denotes the number of edges incident to $x$). 
	Since edge-contraction can 
	create parallel edges, it is possible that $\texttt{neighbors}(x)$ contains
	parallel edges, and hence the second element of the tuple need not be unique.
	Again this will not pose problems later.
\end{itemize}

In Subsection~\ref{subsection:modifications}, we will add labels and other meta-data
to vertices of our graph.  We make no assumptions as to how the meta-data 
are updated when two vertices are contracted, and so we will maintain those manually.

\subsection{Preprocessing}
\label{subsection:modifications}

We take as input a simple 1-planar graph $G$, given by specifying its planarization
via the rotational clockwise orders of edges at the vertices, and assuming that 
vertices of the planarization that corresponds to crossings are marked as such.
From $G$, we can construct a planar-maximal supergraph $G^+$ in linear time 
\cite{maximal-1-planar-linear}, and along the way construct the planarization
$(G^+)^\times$ of $G^+$.   As before we use $H$ to denote the
skeleton of $G^+$, but we do not construct it explicitly.
Instead, notice that the vertices of $(G^+)^\times$ marked as crossings
correspond uniquely to the quadrangles of $H$.  For this reason, we assume
that these vertices are marked with a label $\qu$ and we call such a vertex a
\emph{quadrangle-vertex} (whereas the corresponding face of $H$ is called
a \emph{quadrangle-face}).

Our proof of Theorem~\ref{main_theorem} relies heavily on having faces,
while the data 
structure of Holm et al.~makes no provisions for accessing faces.  For
this reason, we keep the quadrangle-vertices in the graph as representatives
of the quadrangle-faces. This will make it possible to implement the
operation of ``contract $z_i$ and $z_{i+2}$ within quadrangle $f$''
used in Theorem~\ref{main_theorem}
via edge-contractions at the corresponding quadrangle-vertex $f$ 
(see Procedure~\ref{contract} in the appendix for details).    
We also assume that quadrangle-vertex $f$ has references to the
two original edges in $G$ that crossed; when doing such a contraction
within $f$ we can hence also record the corresponding edge $z_iz_{i+2}$
for inclusion in the forest-part of the partition.

Our proof of Theorem~\ref{main_theorem} also requires
knowing the order of vertices along a quadrangle, and the data structures
do not support this directly.  Therefore at any quadrangle-vertex $f$ we stellate
each of the four incident triangular faces; 
see Figure~\ref{gadget}.
Since we have not yet contracted any edges, we have access to the rotational
clockwise order of edges at $f$; we can hence label the added vertices with $\qu_i$
(for $0\leq i\leq 3$) in clockwise order around $f$.  With this, we can
retrieve the clockwise order $\langle z_0,\dots,z_3 \rangle$ of vertices
on the quadrangle-face corresponding to $f$ in constant time
(see Procedure~\ref{reconstruct} in the appendix for details).


\begin{figure}[ht]
    \centering
    \includegraphics[scale=0.55,page=2]{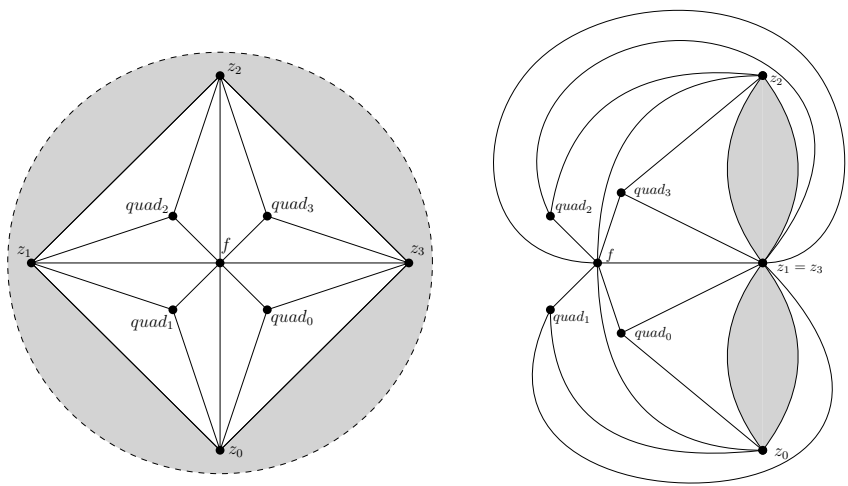}
    \caption{The gadget added to every quadrangle of $H$, shown for both a quadrangle
    with four vertices on its boundary (left)
    and one with three vertices on its boundary (right).
}
    \label{gadget}
\end{figure}

We use $H^\diamondsuit$ for the graph that results after all these modifications
(it can be viewed as the skeleton $H$ with a ``diamond''-gadget inserted into
each quadrangle-face).  
We also add the following meta-data to each vertex of $H^\diamondsuit$:
\begin{itemize}
    \item A boolean \texttt{adj}, initialized to \textbf{false}.
    \item A boolean \texttt{in\_worklist}, initialized to \textbf{false}.
    \item An integer \texttt{opposing}, initialized to 0.
\end{itemize}
The main idea of our algorithm is to iteratively contract all the quadrangles
incident to some vertex $x$. As we do this, we will use \texttt{adj} to mark
vertices that are adjacent to $x$, \texttt{in\_worklist} to mark unprocessed
quadrangles incident to $x$, and \texttt{opposing} to keep track of the number of
quadrangle-faces where vertex $y$ is the opposing vertex of $x$.

 Since $G$ is a 1-planar graph (and therefore has at most $4n-8$ edges),
 $H^\diamondsuit$ has $O(n)$ edges. All steps in this preprocessing can
hence be done in linear time.

\subsection{Handling the Quadrangles around a Vertex}
\label{all:the:contractions}

The main subroutine of our algorithm 
handles all quadrangles incident to some vertex $x$, by contracting each
of them using the criteria
laid out in Theorem~\ref{main_theorem} to decide which vertices to contract.
To do so, we will initialize and maintain a work-list of faces incident to $x$
that we need to contract (using \texttt{in\_worklist} to avoid putting duplicate quadrangles into
the worklist).  We also mark vertices in $H^\diamondsuit$ as \texttt{opposing}
and \texttt{adj} to $x$ as needed;
this can be done in $O(d(x))$ time by retrieving all neighbours of $x$ via \texttt{neighbours}.
Procedure~\ref{alg:add:quads} in the appendix shows the details.

Now we iteratively
contract all the faces in the worklist according to Theorem~\ref{main_theorem};
with \texttt{adj} and \texttt{opposing} we can determine the correct case
in constant time. See Procedure~\ref{main_algorithm} for details.
In Case~1, we need to update some values of opposing: the vertex
$z_2$ which was opposing $x$ now no longer opposes $x$ in $f$
(since $f$ was destroyed), so we decrement $z_2.\texttt{opposing}$.
Likewise the new vertex $v$
resulting from the contraction will be opposing $x$ in all those quadrangles in which
$z_1$ and $z_3$ previously opposed $x$, so we set $v.\texttt{opposing}$ correspondingly.
In Case~2, when we contract some vertex $z_2$ into $x$, we need to
add the quadrangles incident to $z_2$ to our worklist, for which we can
re-use Procedure~\ref{alg:add:quads}.

Lastly, once our worklist is empty (and hence there are
no more quadrangles incident to $x$),
we reset the meta-data of $x$ and its neighbors, so that when repeating
the procedure with a different vertex as $x$ there are no stray vertices
with meta-data set to erroneous values.

    \SetKwFunction{facialcycle}{FacialCycle}
    \SetKwFunction{contractthrough}{ContractThrough}
    \SetKwFunction{addquads}{InitializeAtOneVertex}
    \SetKwFunction{cleanup}{CleanupAtOneVertex}
    \SetKwFunction{handlequads}{HandleQuadsAtOneVertex}
    \DontPrintSemicolon
\begin{procedure}
    \KwResult{Contract all the quadrangle-faces incident to a vertex $x$}

    \texttt{worklist} := []\;
    \addquads{$x$, \textnormal{\texttt{worklist}}} \tcp*{see Proc.~\ref{alg:add:quads} in appendix}
    \For{quadrangle-vertex $f$ in \textnormal{\texttt{worklist}}} {
        $\angles{z_0, z_1, z_2, z_3} := \facialcycle{f}$ \tcp*{see Proc.~\ref{reconstruct} in appendix}
        relabel $z_i$ such that $x$ equals $z_0$\;
        \uIf(\tcp*[f]{Case 1}){$z_2$ equals $x$ \textbf{or} $z_2$.\textnormal{\texttt{adj}} is true \textbf{or} $z_2.\textnormal{\texttt{opposing}} \geq 2$} {
            \texttt{opposing1} := $z_1$.\texttt{opposing}\;
		\texttt{opposing3} := $z_3$.\texttt{opposing}\; 
            $v$ := \contractthrough{$z_1, z_3, f$} \tcp*{see Proc.~\ref{contract} in appendix} 
            $v$.\texttt{adj} := \textbf{true}\; 
            $v$.\texttt{opposing} := \texttt{opposing1} + \texttt{opposing3}\;
            $z_2$.\texttt{opposing} $-\!=$ 1
        } 
	\Else(\tcp*[f]{Case 2}){
            \addquads{$z_2$, \textnormal{\texttt{worklist}}}\;
            $x$ := \contractthrough{$x, z_2, f$}\;
        }
    }
    \cleanup{$x$} \tcp*{see Proc.~\ref{alg:cleanup} in appendix}

    \caption{HandleQuadsAtOneVertex($x$)}
    \label{main_algorithm}
\end{procedure}

\subsection{Putting it All Together}
\label{all:together:now}
\label{contract:runtime}

The following summarizes our algorithm: after preprocessing, and for as
long as there is a quadrangle-face $f$ left, process all quadrangles
at a vertex $x$ on $f$ and record all edges that belong to the forest along the
way. See Procedure~\ref{final:alg} in 
the appendix
for a detailed description.

It remains to analyze the run-time.  For now, we ignore the time required to 
perform the contractions and analyze the time for handling all quads at one
vertex $x$.  Initialization takes $O(d(x))$ time, and most other steps
take constant time per handled quadrangle, with one notable exception: When we
contract some vertex $z_2$ into $x$, we must update the worklist, which takes
time $O(d(z_2))$.  Complicating matters further, $z_2$ may actually be the 
result of prior contractions, so its degree may be more than what it was in
$H^\diamondsuit$, and we must ensure that degrees of vertices are not counted
repeatedly.  

\todo{FYI: Expanded this quite a bit since I misunderstood it last time.}
To handle this,  let 
$H^\diamondsuit_f$ be the graph that results from $H^\diamondsuit$
after {\em all} quadrangle-vertices have been contracted, and let
$s(x)$ be the set of vertices that were contracted into $x$, either
directly (when handling the quadrangles at $x$) or indirectly (i.e.,
if they had been contracted into one of the vertices $z_2$ that later
get contracted into $x$).    Crucially, note that if $x,x'$ are two
vertices on which we perform \texttt{HandleQuadsAtOneVertex}, then
$s(x)$ and $s(x')$ are disjoint.  This holds because once we are done 
with the first of
them (say $x$), all vertices in $s(x)$ have been combined with $x$
and no longer have any incident quadrangles.  Since we only contract
vertices into $x'$ that are incident to quadrangles, none of the 
vertices in $s(x)$ becomes part of $s(x')$.

Hence for each vertex $x$ of $H^\diamondsuit_f$, the amount
of work done in Procedure~\ref{main_algorithm} is proportional to
the sum of the degrees $d_{H^\diamondsuit}(y)$ for each $y \in s(x)$.
Since $H^\diamondsuit$ has $O(n)$ edges, and all other parts of the
algorithm take constant time per quadrangle-vertex,
the total amount of work done is at most
\[
    O\left(
        \sum_{x \in V\left(H^\diamondsuit_f\right)}
        \sum_{y \in s(x)} d_{H^\diamondsuit}(y)
    \right)
    = O\left(\sum_{x \in V(H^\diamondsuit)}d_{H^\diamondsuit}(x)\right)
    = O\left(|V\left(H^\diamondsuit\right)|\right)
    = O(|V(G)|)
\]
hence the algorithm is linear (ignoring the time for contractions).



Now we consider the run-time of possible data structures for contractions.
Our first approach is to represent the graph with incidence lists, where every
vertex has a list of incident edges, each edge knows both of its endpoints,
and every list knows its length.
We can implement \texttt{neighbors}$(x)$ in constant time by simply
returning an iterator to the incidence list at $x$.
Contracting two vertices $u$ and $v$ can be done in $O(\min\{d(u),d(v)\})$ time
by re-attaching the edges of the vertex with smaller degree to the vertex
with larger degree. As with \textsc{Union-Find} data structures
implemented with linked lists (see e.g. Section 4.6 of~\cite{kleinberg-tardos}),
one shows that the amortized time for this is $O(\log n)$ per contraction.
In particular, for graphs with linearly many edges, a set of $\Theta(n)$
contractions can be done in $O(n \log n)$ time. With this we have
our first result.

\begin{theorem}
    Let $G$ be a 1-planar graph implemented with incidence lists.
    It is possible to find a PGF-partition of $G$ in $O(n \log n)$ time.
\end{theorem}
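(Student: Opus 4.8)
The plan is to assemble the ingredients already established and supply a careful accounting of the cost of the contractions. First I would argue correctness: Procedure~\ref{main_algorithm} is a faithful implementation of the induction in the proof of Theorem~\ref{main_theorem}. The case distinction in the pseudo-code tests exactly the three sub-cases of Case~1 (namely $z_2=x$, $z_2$ adjacent to $x$ via the \texttt{adj} flag, and $x,z_2$ opposing in a second quadrangle via $z_2.\texttt{opposing}\geq 2$, since they already oppose in $f$), and otherwise performs Case~2. The meta-data \texttt{adj} and \texttt{opposing} are initialized and maintained so that these tests are correct at the moment each quadrangle is processed, and contracting at a quadrangle-vertex $f$ realizes the operation ``contract $z_i$ and $z_{i+2}$ through $f$''. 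Hence the recorded chords are exactly those produced by Theorem~\ref{main_theorem}; they form a forest containing one chord of every quadrangle, and Lemma~\ref{adding:chords:lemma} then yields the desired PGF-partition.

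For the running time I would split the work into two parts. The first part is everything except the contractions themselves; by the charging argument preceding the theorem (charging the work of each call to \texttt{HandleQuadsAtOneVertex} to the degrees $d_{H^\diamondsuit}(y)$ of the vertices $y\in s(x)$ merged into $x$, and using that the sets $s(x)$ are pairwise disjoint) this part sums to $O(n)$. The second part is the time spent inside the \texttt{contract} operations, which is where the incidence-list implementation enters and where the logarithmic factor appears.

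The heart of the argument is bounding the total contraction cost. With incidence lists, contracting $u$ and $v$ costs $O(\min\{d(u),d(v)\})$, since we re-attach the incidences of the lower-degree endpoint onto the higher-degree one. I would bound the aggregate cost by the standard weighted-union analysis used for \textsc{Union-Find} with linked lists~\cite{kleinberg-tardos}: charge each contraction to the edge-incidences it moves, and observe that an incidence is moved only when it lies in the smaller of the two lists, in which case the list it joins has at least twice the former size. Therefore each of the $O(n)$ incidences of $H^\diamondsuit$ is moved at most $O(\log n)$ times over the whole execution, so the total contraction cost is $O(n\log n)$. Contracting the gadget edges touches only $O(1)$ incidences each and does not affect this bound. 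Adding the two parts gives an overall running time of $O(n\log n)$.

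The step I expect to be the main obstacle is this last amortized bound. One has to be careful that it is edge-\emph{incidences}, not vertices, that are being charged; that parallel edges and loops created by contraction do not inflate the incidence count beyond $O(n)$ (the skeleton $H$, and hence $H^\diamondsuit$, never gains edges under contraction, so the incidence count only decreases); and that the disjointness of the sets $s(x)$ lets us treat the entire execution as a single global sequence of contractions to which the weighted-union bound applies at once, rather than arguing it separately for each choice of $x$.
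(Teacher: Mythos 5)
Your proposal is correct and follows essentially the same route as the paper: the paper likewise charges the non-contraction work to the disjoint sets $s(x)$ to get $O(n)$, and bounds the contractions by merging the smaller incidence list into the larger, invoking the same weighted-union analysis as for linked-list \textsc{Union-Find} (citing \cite{kleinberg-tardos}) to get $O(\log n)$ amortized per contraction and $O(n\log n)$ overall. Your added care about charging edge-incidences rather than vertices, and about the edge count never increasing under contraction, is a faithful elaboration of the paper's (more terse) argument rather than a departure from it.
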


To improve this runtime,
we appeal to the following result by Holm et al.

\begin{theorem}[Holm et al. \cite{2017-ESA-PlanarGraphContraction}]
    \label{fancy:data:structure}
    Let $G$ be a planar graph with $n$ vertices and $m$ edges.
	Then there exists a data structure 
	that supports
	\texttt{contract} and \texttt{neighbours} and that can be
	initialized in $O(n+m)$ time.
    Any calls to \texttt{neighbors} can be processed in worst case constant time,
    and any sequence of calls to \texttt{contract} can be performed
    in time $O(n+m)$.
\end{theorem}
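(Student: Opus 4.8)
The plan is to realize the data structure directly on top of a combinatorial description of the planar embedding, so that \texttt{neighbours} is essentially free and all the difficulty is concentrated in \texttt{contract}. I would represent the graph by its rotation system: at each vertex store a circular doubly-linked list of the darts (half-edges) incident to it in clockwise order, have each edge link its two darts and record its two endpoints, and keep a counter for each list length. Building this from the given clockwise edge orders at the vertices is a single pass, so initialization is $O(n+m)$. The query \texttt{neighbours}$(x)$ then just hands back an iterator into the dart-list stored at $x$, which is worst-case $O(1)$ to produce and $O(d(x))$ to traverse, matching the stated guarantees; that parallel edges and loops may appear in the list is automatic, since nothing is ever deduplicated.

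For \texttt{contract}$(e)$ with $e=uv$, the key observation is that the embedding update itself is cheap: the clockwise rotation at the merged vertex $w$ is obtained by splicing the two circular dart-lists of $u$ and $v$ together at the two positions occupied by $e$ and then discarding $e$'s darts. Splicing doubly-linked circular lists is $O(1)$ and yields exactly the planar rotation system of the contracted graph (any former parallel copy of $e$ simply becomes a loop at $w$, which is permitted). The real cost is bookkeeping: every surviving dart still names its owner as $u$ or $v$, and later queries must be able to resolve a dart to its \emph{current} vertex. The naive repair is to re-stamp every dart of one side with $w$; charging the re-stamping to the smaller side gives the classical union-by-size bound of $O(\log n)$ amortized per contraction, hence $O(n\log n)$ in total for the $O(n)$ edges of a planar graph. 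This is precisely the weaker incidence-list variant underlying the first theorem of this section.

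The hard part -- and the reason a dedicated result of Holm et al.~is needed -- is shaving off the logarithmic factor to a total of $O(n+m)$ over the entire contraction sequence, a bound that generic union-find provably cannot reach. I would therefore aim to abandon eager re-stamping entirely: resolve vertex identities lazily, and amortize the cost of repairing dart-owner information so that each dart is touched only $O(1)$ times over the whole sequence. The leverage must come from planarity itself rather than from union-find used as a black box, e.g.\ by exploiting that a planar graph has only $O(n)$ edges and that contractions monotonically destroy structure, so that a careful charging scheme can pay a constant total per dart as it is merged, relabeled, and finally eliminated. Designing such a scheme -- guaranteeing worst-case $O(1)$-to-start neighbour iteration while keeping the cumulative contraction work linear -- is the genuine technical obstacle, and is exactly the contribution of \cite{2017-ESA-PlanarGraphContraction}, which we invoke here as a black box rather than reprove.
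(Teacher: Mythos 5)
Your proposal takes the same route as the paper: the statement is Holm et al.'s result, which the paper imports verbatim and uses strictly as a black box with no proof of its own, and your treatment---set up the rotation-system representation with $O(1)$ \texttt{neighbours}, observe that eager small-to-large relabeling only yields $O(n\log n)$, then delegate the linear total contraction bound to the citation---matches this exactly, down to the $O(n\log n)$ fallback, which the paper states separately as its incidence-list theorem. One aside: your remark that generic union-find ``provably cannot reach'' the linear bound is overstated (the known lower bounds are inverse-Ackermann, not logarithmic, and for union structures forming planar graphs linear-time union-find is in fact achievable), but nothing in your argument depends on that claim.
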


Since our graph has $O(n)$ edges, we have the main result of this paper.

\begin{theorem}
    Let $G$ be a 1-planar graph. Then in $O(n)$ time we can find
	an edge-partition of $G$ into a forest and a planar graph.
\end{theorem}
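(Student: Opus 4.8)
The plan is to assemble the final result from the three ingredients already in place: the existence statement of \ref{main_theorem}, the reduction of \ref{adding:chords:lemma}, and the contraction data structure of \ref{fancy:data:structure}. Correctness comes essentially for free from the earlier sections; the real content of the proof is verifying that every phase of the algorithm runs in time linear in $n$, and then plugging in the fast data structure so that even the contractions cost only $O(n)$ in total.

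First I would fix the algorithm precisely. After the linear-time preprocessing of Subsection~\ref{subsection:modifications} (building the planar-maximal supergraph $G^+$, its planarization $(G^+)^\times$, inserting the diamond gadgets to obtain $H^\diamondsuit$, and initializing the meta-data), the top-level loop (Procedure~\ref{final:alg}) repeatedly selects a vertex $x$ still incident to a quadrangle-face and calls \texttt{HandleQuadsAtOneVertex}$(x)$ (Procedure~\ref{main_algorithm}), recording the chord $z_iz_{i+2}$ for each contracted quadrangle, and terminates once no quadrangle-faces remain. I would argue correctness by observing that this is exactly the induction of \ref{main_theorem}: each contraction eliminates a quadrangle by merging a non-adjacent opposing pair chosen according to the case distinction of Table~\ref{my:table}, the meta-data \texttt{adj} and \texttt{opposing} let us decide the correct case in constant time, and the contraction is realized through the quadrangle-vertex via an edge-contraction at its gadget. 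Hence the recorded chords form a forest with no path between vertices adjacent in $H$, and \ref{adding:chords:lemma} turns this into a PGF-partition of $G$, i.e.\ an edge-partition into a forest and a planar graph.

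For the running time I would separate the bookkeeping cost from the cost of the contractions themselves. The bookkeeping is handled by the amortization already carried out in Subsection~\ref{contract:runtime}: since the sets $s(x)$ of vertices ever merged into a given representative $x$ are pairwise disjoint, the total work outside of contractions is bounded by $\sum_{x} d_{H^\diamondsuit}(x) = O(|V(H^\diamondsuit)|) = O(n)$, using that a 1-planar graph has at most $4n-8$ edges so that $H^\diamondsuit$ has $O(n)$ edges. For the contractions I would invoke \ref{fancy:data:structure}: the data structure of Holm et al.\ answers every \texttt{neighbors} call in worst-case constant time and performs the \emph{entire} sequence of \texttt{contract} calls in $O(n+m)$ time; with $m = O(n)$ this is $O(n)$, as is the $O(n+m)$ initialization. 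Summing the two contributions yields the $O(n)$ bound.

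The step I expect to require the most care is the interface between the abstract proof of \ref{main_theorem}, which reasons about faces and opposing vertices, and the data structure, which knows only about vertices, edges, and \texttt{neighbors} and explicitly provides no access to faces. Keeping the quadrangle-vertices and their diamond gadgets in the graph as stand-ins for faces, and maintaining the \texttt{opposing} counters correctly as contractions destroy some quadrangles and transfer others to the merged vertex, is precisely what legitimizes the constant-time case distinction; checking that these counters are updated correctly (the decrement of $z_2.\texttt{opposing}$ and the transfer of the counts of $z_1$ and $z_3$ to the new vertex $v$ in Case~1) is the delicate bookkeeping underpinning the whole linear-time claim.
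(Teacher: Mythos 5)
Your proposal is correct and takes essentially the same route as the paper: the paper's proof of this theorem is exactly the combination you describe, namely the linear-time preprocessing and worklist algorithm of Section~\ref{algorithm}, correctness via the case analysis of Theorem~\ref{main_theorem} together with Lemma~\ref{adding:chords:lemma}, the amortization based on the pairwise disjointness of the sets $s(x)$, and finally Theorem~\ref{fancy:data:structure} with $m = O(n)$ to bound the total contraction cost. Your closing remark about maintaining the \texttt{opposing} counters (the decrement at $z_2$ and the transfer of counts to the merged vertex) likewise matches the paper's own discussion of Procedure~\ref{main_algorithm}.
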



\section{Conclusion}
\label{conclusion}

In this paper, we reproved a result from Ackerman that all 1-planar graphs
admit a partition into a planar graph and a forest. Our proof is more general
than Ackerman's; the forest we find is guaranteed to not contain a path
between adjacent vertices of the input graph. Using this proof
and a data structure from Holm et al. for efficiently contracting the edges of
a planar graph, we were able to find 
this partition in linear time. 
In consequence, a number of results for 1-planar graphs (such as splitting
into 4 forests or 4-list-coloring if the graph is bipartite) can now be
achieved in linear time.
We also showed that the same algorithm can
be implemented in $O(n \log n)$ time with a simpler data structure that
uses only incidence lists.

As for open problems, the most interesting one is whether the partition
could be found even {\em without} being given the 1-planar drawing.
(Recall that it is NP-hard to find such a drawing \cite{1-planar-np-hard},
though it is polynomial for optimal 1-planar graphs~\cite{recognize-optimal-1-planar}.)
All papers listed in the introduction
for finding various edge partitions of 1-planar graphs require such an embedding.
If this is difficult, could we at least do some of the implications
(such as splitting into 4 forests or orienting such that all in-degrees
are at most 4) in linear time without a given 1-planar drawing?

\bibliography{lifetime}

\begin{thebibliography}{10}

\bibitem{2014-Ackerman-1PlanarSplit}
Eyal Ackerman.
\newblock A note on 1-planar graphs.
\newblock {\em Discrete Appl. Math.}, 175:104--108, 2014.
\newblock URL: \url{https://doi.org/10.1016/j.dam.2014.05.025}, \href
  {http://dx.doi.org/10.1016/j.dam.2014.05.025}
  {\path{doi:10.1016/j.dam.2014.05.025}}.

\bibitem{maximal-1-planar-linear}
Md.~Jawaherul Alam, Franz~J. Brandenburg, and Stephen~G. Kobourov.
\newblock Straight-line grid drawings of 3-connected 1-planar graphs.
\newblock In Stephen Wismath and Alexander Wolff, editors, {\em Graph Drawing},
  pages 83--94, Cham, 2013. Springer International Publishing.

\bibitem{angelini2019quasefe}
Patrizio Angelini, Henry F{\"o}rster, Michael Hoffmann, Michael Kaufmann,
  Stephen Kobourov, Giuseppe Liotta, and Maurizio Patrignani.
\newblock The quasefe problem.
\newblock In {\em International Symposium on Graph Drawing and Network
  Visualization}, pages 268--275. Springer, 2019.

\bibitem{2-plane-partition}
Michael~A. Bekos, Emilio {Di Giacomo}, Walter Didimo, Giuseppe Liotta, Fabrizio
  Montecchiani, and Chrysanthi Raftopoulou.
\newblock Edge partitions of optimal 2-plane and 3-plane graphs.
\newblock {\em Discrete Mathematics}, 342(4):1038--1047, 2019.
\newblock URL:
  \url{https://www.sciencedirect.com/science/article/pii/S0012365X18304230},
  \href {http://dx.doi.org/https://doi.org/10.1016/j.disc.2018.12.002}
  {\path{doi:https://doi.org/10.1016/j.disc.2018.12.002}}.

\bibitem{owen-manuscript}
Therese Biedl, Anna Lubiw, and Owen Merkel.
\newblock List coloring bipartite graphs embedded on a surface.
\newblock Unpublished Manuscript, 2019.

\bibitem{recognize-optimal-1-planar}
Franz~J. Brandenburg.
\newblock Recognizing optimal 1-planar graphs in linear time.
\newblock {\em Algorithmica}, 80(1):1--28, 2018.
\newblock URL: \url{https://doi.org/10.1007/s00453-016-0226-8}, \href
  {http://dx.doi.org/10.1007/s00453-016-0226-8}
  {\path{doi:10.1007/s00453-016-0226-8}}.

\bibitem{czap-1-planar-split}
J\'{u}lius Czap and D\'{a}vid Hud\'{a}k.
\newblock On drawings and decompositions of 1-planar graphs.
\newblock {\em Electron. J. Combin.}, 20(2):Paper 54, 8, 2013.
\newblock URL: \url{https://doi.org/10.37236/2392}, \href
  {http://dx.doi.org/10.37236/2392} {\path{doi:10.37236/2392}}.

\bibitem{nic-edge-partitions}
Emilio {Di Giacomo}, Walter Didimo, William~S. Evans, Giuseppe Liotta, Henk
  Meijer, Fabrizio Montecchiani, and Stephen~K. Wismath.
\newblock New results on edge partitions of 1-plane graphs.
\newblock {\em Theoretical Computer Science}, 713:78--84, 2018.
\newblock URL:
  \url{https://www.sciencedirect.com/science/article/pii/S0304397517309192},
  \href {http://dx.doi.org/https://doi.org/10.1016/j.tcs.2017.12.024}
  {\path{doi:https://doi.org/10.1016/j.tcs.2017.12.024}}.

\bibitem{diestel}
Reinhard Diestel.
\newblock {\em Graph theory}, volume 173 of {\em Graduate Texts in
  Mathematics}.
\newblock Springer, Berlin, fifth edition, 2018.
\newblock \href {http://dx.doi.org/10.1007/978-3-662-53622-3}
  {\path{doi:10.1007/978-3-662-53622-3}}.

\bibitem{gabow-arboricity}
Harold~N. Gabow and Herbert~H. Westermann.
\newblock Forests, frames, and games: algorithms for matroid sums and
  applications.
\newblock {\em Algorithmica}, 7(5-6):465--497, 1992.
\newblock URL: \url{https://doi.org/10.1007/BF01758774}, \href
  {http://dx.doi.org/10.1007/BF01758774} {\path{doi:10.1007/BF01758774}}.

\bibitem{1-planar-np-hard}
Alexander Grigoriev and Hans~L. Bodlaender.
\newblock Algorithms for graphs embeddable with few crossings per edge.
\newblock {\em Algorithmica}, 49(1):1--11, 2007.
\newblock URL: \url{https://doi.org/10.1007/s00453-007-0010-x}, \href
  {http://dx.doi.org/10.1007/s00453-007-0010-x}
  {\path{doi:10.1007/s00453-007-0010-x}}.

\bibitem{choice-number-in-graphs}
S.~Gutner.
\newblock {\em Choice numbers of graphs}.
\newblock PhD thesis, Tel-Aviv University, 1992.

\bibitem{2017-ESA-PlanarGraphContraction}
Jacob Holm, Giuseppe~F. Italiano, Adam Karczmarz, Jakub Łącki \L\k{a}cki, Eva
  Rotenberg, and Piotr Sankowski.
\newblock Contracting a planar graph efficiently.
\newblock In {\em 25th {E}uropean {S}ymposium on {A}lgorithms}, volume~87 of
  {\em LIPIcs. Leibniz Int. Proc. Inform.}, pages Art. No. 50, 15. Schloss
  Dagstuhl. Leibniz-Zent. Inform., Wadern, 2017.

\bibitem{beyond-planar-graphs}
Seok-Hee Hong and Takeshi Tokuyama.
\newblock {\em Beyond Planar Graphs}.
\newblock Springer Nature, Singapore, 2020.

\bibitem{2014-Karpov}
D.~V. Karpov.
\newblock An upper bound on the number of edges in an almost planar bipartite
  graph.
\newblock {\em J. Math Sci.}, 196:737--746, 2014.

\bibitem{kleinberg-tardos}
Jon Kleinberg and Eva Tardos.
\newblock {\em Algorithm Design}.
\newblock Addison-Wesley Longman Publishing Co., Inc., USA, 2005.

\bibitem{annotated-bibliography}
Stephen~G. Kobourov, Giuseppe Liotta, and Fabrizio Montecchiani.
\newblock An annotated bibliography on 1-planarity.
\newblock {\em Comput. Sci. Rev.}, 25:49--67, 2017.
\newblock URL: \url{https://doi.org/10.1016/j.cosrev.2017.06.002}, \href
  {http://dx.doi.org/10.1016/j.cosrev.2017.06.002}
  {\path{doi:10.1016/j.cosrev.2017.06.002}}.

\bibitem{lenhart-1-planar-partition}
William~J. Lenhart, Giuseppe Liotta, and Fabrizio Montecchiani.
\newblock On partitioning the edges of 1-planar graphs.
\newblock {\em CoRR}, abs/1511.07303, 2015.
\newblock URL: \url{http://arxiv.org/abs/1511.07303}, \href
  {http://arxiv.org/abs/1511.07303} {\path{arXiv:1511.07303}}.

\bibitem{nash-williams}
C.~St. J.~A. Nash-Williams.
\newblock Decomposition of finite graphs into forests.
\newblock {\em J. London Math. Soc.}, 39:12, 1964.
\newblock URL: \url{https://doi.org/10.1112/jlms/s1-39.1.12}, \href
  {http://dx.doi.org/10.1112/jlms/s1-39.1.12}
  {\path{doi:10.1112/jlms/s1-39.1.12}}.

\bibitem{ringel}
Gerhard Ringel.
\newblock Ein {S}echsfarbenproblem auf der {K}ugel.
\newblock {\em Abh. Math. Sem. Univ. Hamburg}, 29:107--117, 1965.
\newblock URL: \url{https://doi.org/10.1007/BF02996313}, \href
  {http://dx.doi.org/10.1007/BF02996313} {\path{doi:10.1007/BF02996313}}.

\bibitem{bipartite-2-trees}
Gerhard Ringel.
\newblock Two trees in maximal planar bipartite graphs.
\newblock {\em J. Graph Theory}, 17(6):755--758, 1993.
\newblock URL: \url{https://doi.org/10.1002/jgt.3190170610}, \href
  {http://dx.doi.org/10.1002/jgt.3190170610}
  {\path{doi:10.1002/jgt.3190170610}}.

\bibitem{2013-Schaefer}
Marcus Schaefer.
\newblock The graph crossing number and its variants: A survey.
\newblock {\em The Electronic Journal of Combinatorics [electronic only]}, 20,
  04 2013.
\newblock \href {http://dx.doi.org/10.37236/2713} {\path{doi:10.37236/2713}}.

\bibitem{schnyder}
Walter Schnyder.
\newblock Embedding planar graphs on the grid.
\newblock In {\em Proceedings of the First Annual ACM-SIAM Symposium on
  Discrete Algorithms}, SODA '90, page 138–148, USA, 1990. Society for
  Industrial and Applied Mathematics.

\bibitem{schumacher-4n}
H.~Schumacher.
\newblock Zur {S}truktur {$1$}-planarer {G}raphen.
\newblock {\em Math. Nachr.}, 125:291--300, 1986.
\newblock URL: \url{https://doi.org/10.1002/mana.19861250122}, \href
  {http://dx.doi.org/10.1002/mana.19861250122}
  {\path{doi:10.1002/mana.19861250122}}.

\end{thebibliography}

\appendix

\section{Omitted Proofs}

\setcounter{theorem}{3}
\begin{lemma}
    Let $H$ be a plane multigraph without loops, let $f$ be a quadrangle of $H$,
    and let $\angles{z_0,z_1,z_2,z_3}$ be the facial cycle of $f$.
    If $z_i=z_{i+2}$ (addition module 4) for some $i$,
    then $z_{i+1} \neq z_{i+3}$, and there is no face $f'\neq f$ which contains
    $z_{i+1}$ and $z_{i+3}$.
\end{lemma}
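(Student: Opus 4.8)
The plan is to prove both assertions by contradiction, in each case using planarity-preserving surgery to exhibit a $K_5$ subgraph in what is supposed to be a planar multigraph. By the cyclic symmetry of the facial cycle I may relabel so that $i=0$; thus $z_0=z_2$, and the boundary walk of $f$ reads $z_0\to z_1\to z_0\to z_3\to z_0$, so both $z_1$ and $z_3$ are adjacent to $z_0$ along $\partial f$. The two assertions then read: $z_1\neq z_3$, and no face $f'\neq f$ contains both $z_1$ and $z_3$.

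For the first assertion, suppose also $z_1=z_3$. Then every vertex appearing on $\partial f$ is either $z_0$ or $z_1$, so the whole boundary of $f$ consists of edges joining these two vertices. Since $H$ is loopless, each angle of $f$ is squeezed between two parallel $z_0z_1$-edges, and consecutive parallel edges already bound a bigon; hence a single face incident only to $z_0$ and $z_1$ cannot have degree $4$. This contradicts $f$ being a quadrangle, so $z_1\neq z_3$.

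For the second assertion, suppose some face $f'\neq f$ contains $z_1$ and $z_3$. First dispatch the degenerate cases: if the two boundary edges at $z_1$ (resp. at $z_3$) coincide, then that vertex is a pendant hanging into $f$ whose only incident face is $f$, so no such $f'$ exists and the claim is vacuous. Otherwise there are two genuinely distinct $z_0z_3$-edges, and I turn $f$ into a simple quadrangle: subdivide one of these $z_0z_3$-edges with a new vertex $y$ (adjacent to $z_0$ and $z_3$), and route a new chord $yz_1$ through the interior of $f$. This splits $f$ into a triangle and a \emph{simple} quadrangle $f''$ whose four distinct corners are $z_0,z_1,z_3,y$, all while preserving planarity. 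Now stellate $f''$ with a new vertex $c$ (adjacent to $z_0,z_1,z_3,y$), and add the edge $z_1z_3$ through $f'$ — the one place the hypothesis on $f'$ is used. Using $z_1\neq z_3$ from the first part together with the novelty of $y$ and $c$, the five vertices $z_0,z_1,z_3,y,c$ are pairwise distinct, and a short edge-by-edge check shows they are pairwise adjacent, so they induce a $K_5$ in a planar multigraph — the desired contradiction.

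The subdivision step is the crux: it is what manufactures the fifth vertex of the $K_5$. Stellating $f$ directly only yields a $K_4$ on $\{z_0,z_1,z_3,c\}$, and once the edge $z_1z_3$ has been spent inside $f'$ there is no remaining face from which to draw a fifth mutually adjacent vertex; introducing $y$ on a $z_0z_3$-edge supplies a vertex already adjacent to $z_0$ and $z_3$ for free, which the chord $yz_1$ and the stellation then complete. Accordingly, the main obstacle I expect is the bookkeeping that legitimizes every added edge and vertex: confirming that $y$ lies on the correct $z_0z_3$-edge so that $f''$ is genuinely simple, that the chord $z_1z_3$ is drawable inside $f'$, and that all five vertices are truly distinct so the $K_5$ is an honest subgraph rather than a multigraph artifact. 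Once these routing and distinctness checks are in place, the nonplanarity of $K_5$ closes the argument.
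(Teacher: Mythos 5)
Your proof is correct and takes essentially the same route as the paper's: the same parallel-edge observation rules out $z_1=z_3$, and the same surgery --- subdividing a $z_0z_3$-edge to create $y$, adding the chord $yz_1$ to form the simple quadrangle $f''$, stellating with $c$, and drawing $z_1z_3$ through $f'$ --- produces the $K_5$ on $\{z_0,z_1,y,z_3,c\}$ contradicting planarity. The only difference is that you explicitly dispatch the degenerate case where a boundary vertex is pendant (so the two $z_0z_3$-edges coincide), a point the paper's proof leaves implicit; this is a minor but welcome extra check, not a different argument.
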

\begin{proof}
    (see Figure~\ref{contract:z1:z3}(b))
    Up to renaming, we may assume that $i=0$, so $z_0=z_2$.
    Assume by way of contradiction that $z_1=z_3$.
    Then $f$ consists of several parallel edges between $z_0=z_2$ and $z_1=z_3$, and thus
    $f$ is not a quadrangle.

    Assume by way of contradiction that there is some face $f' \neq f$ which
    contains $z_1$ and $z_3$. Subdivide one of the $z_0z_3$ edges to obtain
    a new vertex $y$ adjacent to $z_0$ and $z_3$, and further add an edge $yz_1$
    within $f$.
    We have split $f$ and attained a simple quadrangle $f''$ with facial cycle
    $\angles{z_0, z_1, y, z_3}$. Stellate $f''$ with a new
    vertex $c$, and add an edge $z_1z_3$ through the face $f'$.
    All these steps maintain the planarity of $H$. Moreover, the five
    vertices $z_0, z_1, y, z_3, c$ are pairwise adjacent. But this forms a $K_5$ which
    is not planar, a contradiction.
\end{proof}

\begin{lemma}
    Let $H$ be a plane multigraph without loops, let $f$ be a quadrangle of $H$,
    and let $\angles{z_0,z_1,z_2,z_3}$ be the facial cycle of $f$.
    If $z_i$ and $z_{i+2}$ (addition modulo 4) for some $i$ are both on some face
    $f' \neq f$, then no face $f'' \neq f,f'$ contains both $z_{i+1}$ and $z_{i+3}$.
\end{lemma}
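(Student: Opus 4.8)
The plan is to argue by contradiction, following the paper's general recipe of exhibiting a planar drawing of $K_5$ whenever the conclusion fails. First I would use the symmetry of the statement to assume without loss of generality that $i=0$, so that $z_0$ and $z_2$ lie on the common face $f'\neq f$, and suppose for contradiction that there is a face $f''\neq f,f'$ containing both $z_1$ and $z_3$.

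Before building the $K_5$, I would dispose of the degenerate coincidences among the $z_j$ using Lemma~\ref{quadrangle:lemma:1}. If $z_0=z_2$, then Lemma~\ref{quadrangle:lemma:1} (applied with $i=0$) forbids any face other than $f$ from containing both $z_1$ and $z_3$, directly contradicting the existence of $f''$; hence $z_0\neq z_2$. Symmetrically, if $z_1=z_3$, then Lemma~\ref{quadrangle:lemma:1} (applied with $i=1$) forbids any face other than $f$ from containing both $z_0$ and $z_2$, contradicting the existence of $f'$; hence $z_1\neq z_3$. Together with the fact that a newly added vertex is distinct from all existing ones, this guarantees that the five vertices in the $K_5$ below are genuinely distinct.

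Next I would construct the drawing. Since $f$ is a quadrangle with facial cycle $\angles{z_0,z_1,z_2,z_3}$, the four boundary edges $z_0z_1$, $z_1z_2$, $z_2z_3$, $z_3z_0$ are already present. I would stellate $f$ with a new vertex $c$, which adds the four edges $cz_0,cz_1,cz_2,cz_3$ inside $f$. I would then draw an edge $z_0z_2$ through $f'$ and an edge $z_1z_3$ through $f''$; this is legitimate because $z_0,z_2$ both lie on $f'$ and $z_1,z_3$ both lie on $f''$ (parallel edges are permitted, so it does not matter whether these pairs were already adjacent). Each of these three operations routes its new edges through the interior of a single face and hence preserves planarity; crucially, because $f,f',f''$ are pairwise distinct, the three operations act in disjoint regions and cannot interfere with one another.

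Finally I would observe that among $\{z_0,z_1,z_2,z_3,c\}$ every one of the ten pairs is now joined by an edge --- four from the quadrangle boundary, four from the stellation, and the two chords $z_0z_2$ and $z_1z_3$ --- so these five distinct vertices induce a $K_5$ subgraph in a planar multigraph, which is impossible. This contradiction establishes the lemma. I expect the only genuinely delicate point to be the degeneracy analysis above: before invoking non-planarity of $K_5$ one must be certain that the five vertices are distinct, and this is exactly what the two applications of Lemma~\ref{quadrangle:lemma:1} secure.
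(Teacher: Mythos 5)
Your proof is correct and takes essentially the same route as the paper's: the identical stellation-plus-two-chords construction yielding a planar $K_5$, with the degenerate coincidences $z_0=z_2$ and $z_1=z_3$ excluded via Lemma~\ref{quadrangle:lemma:1}. The only difference is presentational --- you spell out the two applications of Lemma~\ref{quadrangle:lemma:1} (and the disjointness of the regions $f,f',f''$) that the paper invokes in a single sentence.
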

\begin{proof}
    (see Figure~\ref{contract:z1:z3}(a,c,d,e))
    Up to renaming we may assume that $i=0$, so $z_0$ and $z_2$
    are on $f$ and $f'$. Suppose by contradiction that such a face $f''$ exists.
    By Lemma~\ref{quadrangle:lemma:1}, $z_1 \neq z_3$ and $z_0 \neq z_2$.
    Stellate $f$ with a new vertex $c$, add an edge $z_0z_2$ through $f'$,
    and add an edge $z_1z_3$ through $f''$. The original multigraph $H$ was planar,
    and all of these operations preserve planarity. However, the five vertices
    $z_0, z_1, z_2, z_3$, and $c$ are pairwise adjacent and form a $K_5$, which
    is not planar, a contradiction.
\end{proof}
\section{Omitted Pseudocodes}
\label{appendix-algorithms}

\begin{procedure}
    \KwResult{Contract two vertices $u$ and $v$ in $H$ through a quadrangle-face $f$,
	and return the resulting vertex $y$.}

    \tcp{pre: $f$ is labelled $\qu$}
    \tcp{pre: $v$ and $w$ are opposing on face of $H$ corresponding to $f$}

    Find the original edge $uv$ of $G$ that is stored with $f$.\\
    Record edge $uv$ as belonging to the forest of the partition.\\
    \For{$\angles{e, w}$ in $\texttt{neighbors}(f)$} {
        \If{$v$ equals $w$ \textbf{or} $v$ equals $u$ \textbf{or}
            $w$ has label $quad_i$ for some $0 \leq i \leq 3$} {
            $y := \texttt{contract}(e)$\\
        }
    }
    Remove labels $quad, quad_i$ from $y$\\
    \Return $y$\\
    \caption{ContractThrough($u, v, f$)}
    \label{contract}
\end{procedure}

\begin{procedure}
    \KwResult{Reconstruct the facial circuit of a quadrangle-face, given the
    corresponding quadrangle-vertex.}
    \tcp{pre: $f$ is a vertex of $H^\diamondsuit$ with label $quad$}

    \For{vertex $v$ in $\texttt{neighbors}(f)$} {
        \For{$i = 0, \ldots, 3$} {
            \lIf{$v$ has label $quad_i$} {
                $f_i := v$
            }
        }
    }
    \tcp{By construction $f_i$ has neighbours $\{z_{i-1},z_i,f\}$ (indices are mod 4)}
    \For{$i = 0, \ldots, 3$} {
        $N_i := \texttt{neighbors}(f_i) \cap \texttt{neighbors}(f_{i+1})$ \;
        \lIf{$|N_i| \textbf{ equals } 2$} { $z_i := N_i \setminus \{f\}$ }
    }
    \tcp{If $f$ is not simple then $|N_i|=3$ for two values of $i$, see Fig.\ref{gadget}. But then $z_i$ is determined since we know $z_{i-1}=z_{i+1}$ already}
    \For{$i = 0, \ldots, 3$} {
        \lIf{$|N_i| \textbf{ equals } 3$} { $z_i := N_i \setminus N_{i+2}$ }
    }
    \Return $\angles{z_0, z_1, z_2, z_3}$\\
    \todo[inline]{This should fix the algorithm for Figure 2(right).  TB: Reordered a bit (do the easy case first) and added more explanation.}
    \caption{FacialCycle($f$)}
    \label{reconstruct}
\end{procedure}

\begin{procedure}
    \KwResult{Adds all quadrangle-vertices incident to a vertex $y$ to a worklist,
    taking care not to put duplicates in the worklist.}
    \tcp{pre: $y$ is a vertex of $H$}
    \tcp{pre: $y$ equals $x$ (the vertex we currently work on), or $y$}
    will be contracted into $x$\\
    \For{vertex $v \in \texttt{neighbors}(y)$} {
        $v.\texttt{adj} := \textbf{true}$\;
        \If{$v$ has label $quad$ \textbf{and} \textbf{not} $v$.\texttt{in\_worklist}} {
            $v.\texttt{in\_worklist} := \textbf{true}$ \;
            $\texttt{worklist}.push(v)$\;
            $\angles{z_0, z_1, z_2, z_3}$ := \facialcycle{$v$}\;
            relabel $z_i$ such that $y$ equals $z_0$\;
            $z_2.\texttt{opposing}$ += 1\;
        }
    }

    \caption{InitializeAtOneVertex($y, \texttt{worklist}$)}
    \label{alg:add:quads}
\end{procedure}

\begin{procedure}
    \KwResult{Cleanup the metadata at a vertex $y$ and its neighbors.}
    \tcp{pre: $y$ is a vertex of $H$}
    \For{vertex $v \in \texttt{neighbors}(y) \cup \{y\}$} {
        $v$.\texttt{adj} := \textbf{false} \;
        $v.\texttt{in\_worklist} := \textbf{false}$ \;
        $v.\texttt{opposing} := 0$ \;
    }
    \caption{CleanupAtOneVertex($y$)}
    \label{alg:cleanup}
\end{procedure}

\begin{procedure}
    \KwResult{Find a PGF-partition of a graph $G$.}
    Add edges to make $G$ planar maximal 1-planar\;
    Compute planarization $G^\times$, mark vertices of crossings with $\qu$\; 
    \ForEach{vertex $f$ marked $\qu$}{
	Insert four vertices in four incident faces of $f$\;
        Mark these vertices with $\qu_0,\dots,\qu_3$ according to embedding\;
    }
    \While{there remains a vertex $f$ labeled $\qu$} {
        $x$ := some neighbor of $f$ not labeled $quad_i$\;
        \handlequads{$x$}\;
    }
    Return all edges that were recorded as forest $F$ and $G \setminus F$
    as planar graph
    \caption{FindPGFPartition($G$)}
    \label{final:alg}
\end{procedure}

\end{document}